\keywords{CSP-SAT, hypergraph, separator, resolution, Tseitin formulas}
\DeclareMathOperator{\ex}{\mathbb{E}}
\DeclareMathOperator{\R}{\mathbb{R}}
\DeclareMathOperator{\ftwo}{\mathbb{F}_2}
\DeclareMathOperator{\calC}{\mathcal{C}}
\DeclareMathOperator{\ralg}{\mathsf{CSP-SAT}}
\begin{document}

\title{A Separator Theorem for Hypergraphs and a CSP-SAT Algorithm}

\author[M.~Kouck{\'y}]{Michal Kouck{\'y}}	
\address{Computer Science Institute of Charles University, Prague, Czech Republic}	
\email{koucky@iuuk.mff.cuni.cz}  

\author[V.~R{\"o}dl]{Vojt{\v e}ch R{\"o}dl}
\address{Emory University, Atlanta, USA}
\email{vrodl@emory.edu}

\author[N.~Talebanfard]{Navid Talebanfard}
\address{Institute of Mathematics of the Czech Academy of Sciences, Prague, Czech Republic}
\email{talebanfard@math.cas.cz}




\begin{abstract}
We show that for every $r \ge 2$ there exists $\epsilon_r > 0$ such that any $r$-uniform hypergraph with $m$ edges and maximum vertex degree $o(\sqrt{m})$ contains a set of at most $(\frac{1}{2} - \epsilon_r)m$ edges the removal of which breaks the hypergraph into connected components with at most $m/2$ edges. We use this to give an algorithm running in time $d^{(1 - \epsilon_r)m}$ that decides satisfiability of $m$-variable $(d, k)$-CSPs in which every variable appears in at most $r$ constraints, where $\epsilon_r$ depends only on $r$ and $k\in o(\sqrt{m})$. Furthermore our algorithm solves the corresponding \#CSP-SAT and Max-CSP-SAT of these CSPs. We also show that CNF representations of unsatisfiable $(2, k)$-CSPs with variable frequency $r$ can be refuted in tree-like resolution in size $2^{(1 - \epsilon_r)m}$. Furthermore for Tseitin formulas on graphs with degree at most $k$ (which are $(2, k)$-CSPs) we give a deterministic algorithm finding such a refutation.
\end{abstract}

\maketitle

\section{Introduction}

The $(d, k)$-SAT problem which naturally generalizes $k$-SAT is the problem of deciding whether a system of constraints on $m$ variables from an alphabet of size $d$, where each constraint is on at most $k$ variables, can be satisfied. We will call such a system of constraints a $(d, k)$-CSP and we will assume that when given as input it is represented by the set of truth tables of its constraints. Therefore the satisfiability of a $(d, k)$-CSP $\Psi$ can be checked by exhaustive search in time $|\Psi|^{O(1)}d^m$. Therefore looking for exponential time algorithms beating this trivial running time is a natural direction. 

For the usual $k$-SAT problem, where $d = 2$, there is a plethora of such algorithms (see e.g.~\cite{DBLP:journals/cjtcs/PaturiPZ99,DBLP:journals/jacm/PaturiPSZ05,DBLP:journals/tcs/DantsinGHKKPRS02}). When the CSP encodes a certain structured problem we can also find improved algorithms. The notable example here is the graph $d$-coloring problem which is a special case of $(d, 2)$-SAT and which can be solved in time $O(2^m)$~\cite{DBLP:journals/siamcomp/BjorklundHK09}. More generally $(d, 2)$-SAT also admits non-trivial algorithms~\cite{DBLP:journals/jal/BeigelE05}.

For the general $(d, k)$-SAT we are interested in finding algorithms running in time $d^{(1 - \epsilon)m}$ for some $\epsilon > 0$. We call the parameter $\epsilon$ the {\it savings} of the algorithm,  and we would like these savings to be as large as possible. Note that any $k$-SAT algorithm can be easily converted to a $(d, k)$-SAT algorithm. For each of the original variables, introduce $\log d$ boolean variables representing the original value in binary, and then express each constraint as a $k \log d$-CNF. The conjunction of these CNFs is satisfiable if and only if the original CSP is satisfiable. Assuming that we can solve $k$-SAT in time $2^{(1 - \epsilon_k)m}$, this yields an algorithm running in time $d^{(1 - \epsilon_{k \log d})m}$. That is any non-trivial savings for $k$-SAT yields non-trivial savings for $(d, k)$-SAT. However these savings deteriorate as $d$ grows. This turns out to be the case also for algorithms which are directly designed to solve $(d, k)$-SAT. Sch{\"o}ning's seminal algorithm runs in time $O((\frac{d(k - 1)}{k})^m)$~\cite{DBLP:conf/focs/Schoning99}. Similarly a generalization of PPSZ analyzed by Hertli et al.~\cite{DBLP:conf/cp/HertliHMMSS16} has the same shortcoming. The central question is then whether it is possible to obtain savings independent of the domain size $d$.

Let us define $$\sigma_{d,k} := \sup\{\delta: (d,k)\text{-SAT can be solved in time } O(d^{(1 - \delta) m})\}.$$ The argument above gives $\sigma_{d, k} \ge \sigma_{2, k \log d}$. Furthermore Traxler~\cite{DBLP:conf/iwpec/Traxler08} shows that for all $d$, $\sigma_{2, k} \ge \sigma_{d, k}$. Therefore it follows that under Strong Exponential Time Hypothesis, for all $d$, $\lim_{k \rightarrow \infty} \sigma_{d, k} = 0$. Our central question can be rephrased as follows. 

\begin{qu}
Is it the case that for every $k$, $\lim_{d \rightarrow \infty} \sigma_{d, k} > 0$?
\end{qu}

Currently, we are unable to answer this question. However we show that if each variable appears in a small number of constraints then it is possible to decide satisfiability with savings independent of $d$. Note that this restriction does not influence the NP-hardness of $(d, k)$-SAT; in fact even if each variable appears in at most 2 constraints, it is easy to see that the problem remains NP-hard. Our result can be considered as an extension of a result of Wahlstr{\"o}m~\cite{DBLP:conf/esa/Wahlstrom05} who gave such an algorithm for CNF-SAT when variables have bounded occurrences. However our argument is entirely different. Our algorithm also solves the counting and MAX versions of this problem. For Boolean CSPs with bounded occurrence such a result was shown by Chen and Santhanam~\cite{DBLP:conf/sat/ChenS15}.

\newcounter{dpll-alg}
\setcounter{dpll-alg}{\value{thm}}

\begin{thm}[Main result, informally stated]
There exists an algorithm which decides satisfiability of an $m$ variable $(d, k)$-CSP in which every variable appears in at most $r$ constraints in time $d^{(1 - \epsilon)m}$ where $\epsilon$ depends only on $r$, provided that $k$ does not grow too fast (that is $k \in o(\sqrt{m})$).
\end{thm}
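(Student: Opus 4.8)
The plan is to reduce the satisfiability problem to a divide-and-conquer recursion governed by the hypergraph separator theorem stated in the abstract. First I would associate to a given $(d,k)$-CSP $\Psi$ a hypergraph $H$ whose vertices are the constraints and whose edges are the variables: each variable $x$ becomes an edge containing exactly those constraints in which $x$ appears. Since every variable appears in at most $r$ constraints, $H$ is $r$-uniform (after padding, or we take the at-most-$r$-uniform version). The number of edges of $H$ equals the number of variables $m$, and the maximum vertex degree of $H$ is the maximum number of variables touching a single constraint, which is at most $k \in o(\sqrt{m})$. Thus $H$ satisfies the hypothesis of the separator theorem, so there is a set $S$ of at most $(\tfrac12 - \epsilon_r)m$ edges whose removal splits $H$ into connected components each with at most $m/2$ edges.

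Next I would interpret the separator combinatorially. The edges in $S$ correspond to a set of at most $(\tfrac12-\epsilon_r)m$ \emph{separator variables}; after deleting them the remaining constraints fall into groups (the connected components) that share no common variable across groups. The key structural point is that once we fix an assignment to the separator variables, the residual CSP decomposes into independent subproblems, one per component, and these can be solved separately and their answers combined. For plain satisfiability we take the logical AND over components; for $\#$CSP-SAT we multiply the counts; for Max-CSP-SAT we add the optimal values. I would then set up the recursion: branch over all $d^{|S|}$ assignments to the separator variables, and for each branch recurse on each component. Because each component has at most $m/2$ edges (variables), the recursion depth and branching are controlled, and the dominant cost is $d^{|S|}$ times the cost of solving the largest component.

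The running time would be analyzed by writing a recurrence of the form $T(m) \le d^{(\frac12-\epsilon_r)m}\sum_i T(m_i)$ where $\sum_i m_i \le m$ and each $m_i \le m/2$. The crucial observation making this beat the trivial $d^m$ bound is that the separator is strictly below $m/2$ edges: the savings $\epsilon_r$ in the separator size translate directly into savings in the exponent. I would solve the recurrence by induction, guessing $T(m)\le d^{(1-\epsilon)m}\cdot\mathrm{poly}$ and choosing $\epsilon$ in terms of $\epsilon_r$ so that the branching factor $d^{(\frac12-\epsilon_r)m}$ composed with the recursive cost on pieces summing to at most $m$ still fits under $d^{(1-\epsilon)m}$; the component bound $m_i\le m/2$ is what prevents the subproblem costs from accumulating beyond the budget.

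The main obstacle I anticipate is the bookkeeping of the recurrence rather than any single deep step. Two technical points need care. First, the subproblems produced after fixing the separator variables are genuine $(d,k)$-CSPs with the same frequency bound $r$ and the same degree condition, so the separator theorem applies recursively; one must verify that substituting values into the separator variables does not raise any variable's frequency or violate $k\in o(\sqrt{m})$ relative to the shrinking number of variables, which requires tracking how the parameters evolve down the recursion. Second, solving the recurrence cleanly requires that the components be genuinely balanced (each of size at most $m/2$) and that the total work per level telescope correctly; the term $\sum_i d^{(1-\epsilon)m_i}$ must be dominated by $d^{(1-\epsilon)m}$, which follows from $\sum_i m_i \le m$ together with convexity once $m_i \le m/2$, but the precise choice of $\epsilon$ and the handling of the polynomial overhead from representing and splitting the CSP must be done carefully. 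Everything else — the reduction to a hypergraph, the independence of components, and the multiplicative/additive combination for the counting and MAX variants — is routine once the separator theorem is in hand.
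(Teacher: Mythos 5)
Your proposal matches the paper's proof essentially step for step: the same constraints-as-vertices, variables-as-hyperedges hypergraph, the same separator theorem applied with vertex degree at most $k$ and edge size at most $r$, branching over all $d^{|S|}$ assignments to the separator variables, and the decomposition of the restricted CSP into independent components of at most $m/2$ variables. The only difference is that you recurse on the components, whereas the paper's algorithm simply solves each component by exhaustive search in time $d^{m/2}$ — a one-level scheme that already yields savings $\epsilon_r$ and sidesteps the parameter-drift issue you flag — with the recursive variant (which roughly doubles the savings) mentioned in the paper only as a remark.
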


The algorithm follows a simple branching strategy. At every step we find a small set of variables that once given a value, the CSP breaks into disjoint parts each with at most half of the original variables. For every assignment on these variables we exhaustively solve the problem on the resulting smaller instances. If this set contains strictly less than half of the variables then we obtain savings. To prove that such a small set of variables exists, we associate a natural hypergraph to the CSP. Then we prove a structural result for these hypergraphs: We show that in every $r$-uniform hypergraph on $m$ edges of small vertex degree, there exists a set of significantly less than half hyperedges the removal of which breaks the hypergraph into connected components with at most $m/2$ hyperedges.

\newcounter{hyp-sep}
\setcounter{hyp-sep}{\value{thm}}

\begin{thm}[Hypergraph separator theorem, informally stated]
Let $H = (V, E)$ be an $r$-uniform hypergraph on $m$ hyperedges and maximum vertex degree $k$. Provided that $k$ does not grow too fast as a function of $m$, there exists a set of $(1 - \epsilon)m/2$ hyperedges the removal of which breaks the hypergraph into connected components with at most $m/2$ hyperedges. Furthermore, $\epsilon$ depends only on $r$.
\end{thm}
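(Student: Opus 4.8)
The plan is to pass to the \emph{intersection graph} $G$ whose vertices are the $m$ hyperedges of $H$, with two hyperedges adjacent whenever they share a vertex of $H$; removing a set of hyperedges from $H$ is then exactly deleting the corresponding vertices of $G$, and the connected components of the reduced hypergraph are precisely the components of $G$ after the deletion. Since $H$ is $r$-uniform with maximum vertex degree $k$, every hyperedge meets at most $r(k-1)$ others, so $G$ has maximum degree $\Delta \le r(k-1)$, which is $o(\sqrt m)$ for fixed $r$. Crucially $G$ is not an arbitrary bounded-degree graph but a union of cliques, one clique $C_v$ of size $\deg_H(v)\le k$ for each vertex $v$ of $H$, with every vertex of $G$ lying in exactly $r$ of these cliques. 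The goal becomes: find a vertex set $S\subseteq V(G)$ with $|S|\le(1/2-\epsilon_r/2)m$ such that every component of $G-S$ has at most $m/2$ vertices.

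First I would dispose of the easy cases: if every connected component of $H$ already has at most $m/2$ hyperedges we take $S=\emptyset$, so we may assume a single component $H_0$ with $m/2<m_0\le m$ hyperedges carries more than half of them and concentrate the whole budget on it (a separator of $H_0$ of size $\le(1/2-\epsilon_r)m_0\le(1/2-\epsilon_r)m$ leaves all components of size $\le m/2$). To separate $H_0$ I would grow a region by breadth-first search in $G$ from a well-chosen start and track the prefix at which the explored set first crosses $m/2$; cutting along a suitable BFS layer yields components of size at most $m/2$, the cost being the number of hyperedges placed in the cut. The point of the degree hypothesis is that we can afford to cut through \emph{many} vertices of $H$: deleting a vertex $v$ (equivalently the whole clique $C_v$) costs at most $k=o(\sqrt m)$ hyperedges, so a separating family of up to $\omega(\sqrt m)$ vertices of $H$ still stays well under the budget, and an averaging argument over the admissible cut positions should then exhibit one cut of cost at most $(1/2-\epsilon_r/2)m$. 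The constant $\epsilon_r$ is to be extracted from the $r$-uniform clique structure: because each hyperedge lies in exactly $r$ cliques, a constant fraction of the hyperedges on the near side can be retained without reconnecting the two sides, and optimizing this local gain over the worst configuration yields a bound depending on $r$ alone.

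The main obstacle is precisely this last step — pushing the separator strictly below $m/2$ by a constant factor that does \emph{not} degrade as $k$ grows. Naive cut-based bounds (taking one endpoint of every crossing incidence, or deleting a random half of the hyperedges) fail here, since for highly connected instances the boundary of a balanced cut can approach $m/2$ and a random half leaves a single giant component; controlling this is exactly where the hypothesis $k\in o(\sqrt m)$ must be used in full strength, together with the fact that $G$ is covered by cliques each containing only $r$ of the cliques through any given vertex, to exclude the expander-like behaviour that would otherwise force the separator up to $(1-o(1))m/2$. I therefore expect the heart of the argument to be an extremal/charging lemma quantifying, in terms of $r$ and the degree bound, how many hyperedges can always be shaved onto the small side, and the bulk of the technical work to lie in making this gain uniform across all degree sequences permitted by $k\in o(\sqrt m)$.
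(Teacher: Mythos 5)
Your reduction to the intersection graph and the treatment of the easy cases are fine, but the proposal stops exactly where the theorem begins: you never produce a constant $\epsilon_r$ depending on $r$ alone, and you say so yourself by deferring the "heart of the argument" to an unproven extremal/charging lemma. That is the entire content of the statement, so this is a genuine gap, not a loose end. Worse, the route you sketch cannot be completed. The extremal instances are (near-)random $r$-uniform hypergraphs of degree $\Theta(k)$ --- exactly the instances the paper uses for its matching lower bound (Theorem \ref{thm:sep-lowerbound}) --- and there the intersection graph $G$ is an expander of degree $\Theta(rk)$. BFS layers in $G$ then grow geometrically with ratio $\Theta(rk)$, so the layer straddling the balance point $m/2$ contains all but an $O\bigl(\frac{1}{rk}\bigr)$ fraction of the explored prefix; hence every admissible layered cut costs at least roughly $\frac{m}{2}\bigl(1 - O(\frac{1}{rk})\bigr)$, averaging over cut positions gains nothing because all positions are equally bad, and the savings degrade with $k$ --- precisely what the theorem forbids. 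The clique structure does not rescue this: putting back hyperedges from the cut layer generically reconnects the two sides, and you give no mechanism that prevents it.

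The missing idea (the paper's proof) is to abandon sparse or layered cuts entirely. Instead, bipartition the vertices of $H$ at random, placing each vertex into $S$ independently with the \emph{biased} probability $p = 2^{-1/r}$, and take the separator to be the crossing hyperedges $E(S, \overline{S})$. Then $\ex[|E(S)|] = p^r m = m/2$, so both sides carry about half the hyperedges, while $\ex[|E(S,\overline{S})|] = \bigl(1 - p^r - (1-p)^r\bigr)m = \bigl(\frac{1}{2} - \epsilon_r\bigr)m$ with $\epsilon_r = (1-2^{-1/r})^r$. The separator is nearly half of all hyperedges --- it is not sparse in any sense --- and the constant gap $\epsilon_r$ comes purely from the algebra of the biased bipartition, which is why it is independent of $k$. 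The degree hypothesis enters only through concentration: flipping one vertex's side changes $|E(S)|$ and $|E(S,\overline{S})|$ by at most $k$, so an Azuma-type inequality (Lemma \ref{lm:concen}) gives deviations $O(k\sqrt{n}) = o(m)$ when $k \in o(\sqrt{m})$ and $n \le rm$; afterwards one moves $o(m)$ further edges into the separator so that both sides have at most $m/2$ hyperedges exactly. Your proposal uses the degree bound for a different and insufficient purpose (cheap deletion of whole vertex-cliques along a BFS frontier) and contains no counterpart of the biased bipartition, which is the step that actually yields $\epsilon_r$.
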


We find this result somewhat unexpected. To see this consider one particular consequence. Our result implies that there exists a universal constant $\epsilon > 0$ such for {any} positive integer $d$, any graph with maximum degree at most $d$ with sufficiently many edges can be broken into components with at most half of the edges by removing at most $\frac{1}{2} - \epsilon$ fraction of edges. What is unexpected is this independence between $\epsilon$ and $d$, and furthermore such a result is impossible if we slightly change the definition of balanced separators and require that the number of vertices (instead of edges) would be at most half of the original graph in each component after removing edges. It is easy to see that this is essentially the same as {\it edge-expansion} which is known to be at least $(\frac{1}{2}-O(\frac{1}{\sqrt{d}}))m$ for some $d$-regular graphs \cite{MR947025}. The independence between savings and degree in our result is precisely what we take advantage of in our applications.

\bigskip
\noindent{\bf Non-trivial exponential size proofs of Tseitin formulas.} We provide yet another application of Theorem \ref{thm:sep} which concerns the proof complexity of unsatisfiable $k$-CNF formulas. Recall that $k$-TAUT is the language of $k$-DNF tautologies (or equivalently the language of unsatisfiable $k$-CNFs). Define $$\nu_k := \sup\{\delta : k\text{-TAUT can be solved in non-deterministic time }O(2^{(1 - \delta)m})\}.$$ 

The {\it non-deterministic strong exponential time hypothesis (NSETH)}~\cite{DBLP:conf/innovations/CarmosinoGIMPS16} states that $\lim_{k \rightarrow \infty} \nu_k = 0$. Quantity $\nu_k$ can of course be defined with respect to a specific proof system instead of a general non-deterministic algorithm. In this direction some works verify NSETH for restrictions of the resolution proof system (see~\cite{DBLP:conf/soda/PudlakI00, DBLP:conf/stoc/BeckI13, DBLP:journals/algorithmica/BonacinaT17}). 

But even if NSETH holds we may ask how fast $\nu_k$ approaches zero. Observe that $\nu_k \ge \sigma_{2, k}$ as any $k$-SAT algorithm in particular refutes unsatisfiable $k$-CNFs formulas. The best known lower bound for $\sigma_{2, k}$ is $\Omega(1/k)$~\cite{DBLP:journals/jacm/PaturiPSZ05} and thus $\nu_k \in \Omega(1/k)$. We now raise a very natural question. 

\begin{qu}
Is it the case that $\nu_k \gg 1/k$, that is can we use non-determinism to beat the best known savings of $k$-SAT algorithms? 
\end{qu}

We can think of two ways to make progress towards this question. In the first direction we could try to obtain lower bounds on $\nu_k$ directly by proving upper bounds on the size of refutations of $k$-CNFs. Interestingly (but not surprisingly) the $\Omega(1/k)$ bound can already be achieved by tree-like resolution~\cite{DBLP:journals/ipl/BonacinaT16}, that is every unsatisfiable $k$-CNF formula in $m$ variables can be refuted by a tree-like resolution proof of size $2^{(1 - \Omega(1/k))m}$. 

In the second direction we can consider {\it general enough} families of $k$-CNFs and try to obtain non-trivial refutations for them in as weak as possible a proof system. Here by general enough families we mean families of formulas which naturally contain a wide spectrum of easy to hard instances. An example of such families is the set of $k$-CNFs encoding systems of linear equations over $\ftwo$ each on at most $k$ variables. Indeed such systems can easily be refuted using Gaussian elimination. But for $k = O(1)$, if the underlying system is minimally unsatisfiable, then there are even non-trivial regular resolution refutations of size $2^{m/2 + o(m)}$~\cite{DBLP:journals/cc/Ben-SassonI10}. Such a bound cannot be obtained in tree-like resolution~\cite{DBLP:conf/soda/PudlakI00}. 

However here using our separator theorem we show that if we restrict ourselves to Tseitin formulas over bounded degree graphs (which are also minimally unsatisfiable linear systems and massively studied in proof complexity, see e.g.~\cite{DBLP:journals/jacm/Urquhart87,DBLP:conf/focs/Hastad17}), then we do have tree-like resolution refutations of size $2^{cm}$ for a universal constant $c < 1$. Another unexpected upper bound for Tseitin formulas has recently been observed in~\cite{DBLP:conf/coco/DadushT20} where a cutting plane proof of quasi-polynomial size has been given. Interestingly their proof is also based on a simple branching argument.

\bigskip
\noindent{\bf Deterministic construction of proofs.} We know from a recent breakthrough result of Atserias and M{\"u}ller~\cite{DBLP:journals/jacm/AtseriasM20} that unless P equals NP, given an unsatisfiable formula it is not possible to construct resolution proofs in time polynomial in the size of the of smallest proof of the input formula, nor in quasi-polynomial or subexponential time under plausible stronger assumptions. However it seems that the question of constructing proofs in non-trivial exponential time, natural counterpart to $k$-SAT algorithms, has not been given too much attention. 

More specifically given a proof system $P$ and positive integer $k$ we define $\pi_{P,k}$ as follows. It is the supremum of $\delta$ such that there is a deterministic algorithm that given an unsatisfiable $m$-variable $k$-CNF, it  constructs a $P$-refutation of it in time $O(2^{(1 - \delta) m})$.

\begin{qu}
Is it the case that $\pi_{\text{Res}, k} \in \Omega(1/k)$?
\end{qu}

As mentioned earlier we know that there are even tree-like resolution proofs with such savings. Here we are asking whether we can construct such proofs deterministically. We cannot answer this question yet. But we will show that over Tseitin formulas we can deterministically construct our proofs obtained by the separator theorem with constant savings.

\section{Preliminaries}

A hypergraph is {\it $r$-uniform} if all of its hyperedges have size exactly $r$. A {\it connected component} in a hypergraph $H = (V, E)$ is a maximal subset of vertices $S \subseteq V$ such that for every pair $u, v \in S$ there exists a sequence of edges $e_1, \ldots, e_t$ only on vertices in $S$ with $u \in e_1$ and $v \in e_t$ and $e_i \cap e_{i + 1} \ne \emptyset$ for every $i \in [t - 1]$. For any $S \subseteq V$ we denote by $E(S)$ the set of hyperedges induced on $S$, i.e., all hyperedges from $E$ which are entirely on vertices in $S$. For $S, T \subseteq V$ we define $E(S, T) = \{e \in E : e \cap S \ne \emptyset, e \cap T \ne \emptyset\}$. We will use $\Delta(G)$ to denote the maximum vertex degree of a (hyper)graph $G$, that is the largest number of edges in which a vertex appears.

Given two real numbers $a \ge b$, we write $a \pm b$ to mean the interval $[a - b, a + b]$. To ease reading we write $x = a \pm b$ to mean $x \in [a - b, a + b]$.

We assume the reader is familiar with basic proof complexity, see~\cite{MR2895965, MR3929744} for more background. However we will use the following interpretation of tree-like resolution proofs.

A {\it decision tree} for an unsatisfiable CNF $\Psi$ is a rooted binary tree where the inner nodes are labeled by variables of $\Psi$ and leaves are labeled by clauses of $\Psi$. Edges are labeled by 0 or 1 and should be interpreted as an assignment to the variable they come out of. Therefore a path from the root to a leaf gives an assignment to some of the variables. Furthermore we require that this assignment falsifies the clause at the corresponding leaf. The following theorem is folklore.

\begin{thm}[see, e.g.,~\cite{DBLP:journals/ipl/BeyersdorffGL13}]
\label{thm:dtree}
The size of the smallest tree-like resolution refutation of $\Psi$ is exactly the size of the smallest decision tree for it.
\end{thm}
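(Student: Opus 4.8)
The plan is to prove both inequalities by exhibiting structure-preserving transformations in each direction; here by \emph{size} I mean the number of nodes of the tree (equivalently, up to the usual factor for binary trees, the number of leaves). Recall that a tree-like resolution refutation of $\Psi$ is a binary tree whose leaves are clauses of $\Psi$, whose root is the empty clause, and where each internal node is obtained by resolving its two children on some variable. The correspondence with decision trees will be essentially a bijection between the nodes of these two trees, so I would first fix the invariant that links them: along any root-to-node path of the decision tree we read off a partial assignment $\alpha$, and I will insist that $\alpha$ \emph{falsifies} the resolution clause sitting at the corresponding node. Since the empty assignment at the root falsifies the empty clause, and a path reaching a leaf falsifies the axiom clause there, this invariant is exactly what the definition of a decision tree for $\Psi$ demands.

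For the direction from a tree-like resolution refutation to a decision tree I would proceed top-down. Starting at the root (the empty clause, falsified by the empty assignment), consider an internal node deriving $D$ from children $D_0 \ni x$ and $D_1 \ni \bar x$ by resolving on $x$, and suppose inductively that the current path assignment $\alpha$ falsifies $D$. I create a decision-tree node querying $x$. On the branch $x=0$ the literal $x$ becomes false while every literal of $D_0 \setminus \{x\} \subseteq D$ is already false under $\alpha$, so $\alpha \cup \{x=0\}$ falsifies $D_0$; symmetrically $\alpha \cup \{x=1\}$ falsifies $D_1$. Recursing into the two subproofs therefore maintains the invariant, and the leaves (axioms of $\Psi$) become leaves of a valid decision tree of exactly the same size, giving one inequality between the minima.

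For the reverse direction I would argue bottom-up by induction on the decision tree, maintaining that at each node whose path-assignment is $\alpha$ I have derived, by tree-like resolution from $\Psi$, a clause falsified by $\alpha$. At a leaf labeled by a clause $C$ of $\Psi$ the defining property of the decision tree gives that $\alpha$ falsifies $C$, so $C$ itself serves. At an internal node querying $x$ with children-clauses $C_0$ (falsified by $\alpha\cup\{x=0\}$) and $C_1$ (falsified by $\alpha\cup\{x=1\}$), the falsification forces $\bar x \notin C_0$ and $x \notin C_1$; if $x\in C_0$ and $\bar x\in C_1$ I resolve them on $x$ to obtain a clause falsified by $\alpha$, and otherwise one of $C_0, C_1$ is already falsified by $\alpha$ and can be carried up unchanged. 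At the root $\alpha$ is empty, so the derived clause is the empty clause, giving a refutation whose size does not exceed that of the decision tree, which is the other inequality.

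The two constructions yield the two inequalities between the minima, which together give the claimed equality. The only delicate point I expect is the bookkeeping in this last construction: the ``degenerate'' cases where the queried variable fails to appear in both child clauses (or where a variable is queried that was effectively resolved already on the path) must be handled so that the resulting proof is genuinely tree-like and its size is not inflated. Stating the invariant as ``falsified by $\alpha$'' rather than ``mentions exactly the variables of $\alpha$'' is precisely what lets these cases be absorbed without extra nodes, so the size bound survives and the correspondence is faithful.
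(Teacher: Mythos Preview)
Your argument is correct and is precisely the standard folklore proof of this equivalence. Note, however, that the paper does not prove Theorem~\ref{thm:dtree} at all: it is stated as a known result with a reference, so there is no ``paper's own proof'' to compare against. Your write-up would serve perfectly well as the omitted proof; the handling of the degenerate cases in the decision-tree-to-refutation direction (passing a child clause up unchanged when the queried variable is absent, thereby possibly shrinking the proof) is exactly the point that makes the size inequality go through, and you have identified it correctly.
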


The following is immediate.

\begin{cor}
\label{cor:dtree}
If there is a decision tree of depth at most $d$ for $\Psi$, then there exists a tree-like resolution refutation of $\Psi$ of size at most $2^d$.
\end{cor}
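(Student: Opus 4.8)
The plan is to read this off directly from Theorem~\ref{thm:dtree} together with the elementary fact that a binary tree of bounded depth has boundedly many leaves. First I would note that a decision tree of depth at most $d$ is, by definition, a rooted binary tree in which every root-to-leaf path has length at most $d$; such a tree has at most $2^d$ leaves. Hence the given decision tree for $\Psi$ has size at most $2^d$, where size is measured by the number of leaves (equivalently, by the number of axiom clauses used in the corresponding refutation).

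Next I would apply Theorem~\ref{thm:dtree}, which asserts that the size of the smallest tree-like resolution refutation of $\Psi$ equals the size of the smallest decision tree for $\Psi$. Since the smallest decision tree is no larger than the particular depth-$d$ tree we are handed, its size is at most $2^d$, and therefore so is the size of the smallest tree-like resolution refutation. In particular $\Psi$ admits a tree-like resolution refutation of size at most $2^d$, which is exactly the claim.

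The statement is essentially immediate, so there is no substantive obstacle; the only point requiring a moment's care is the bookkeeping on the size convention. If one insists on counting all nodes of the decision tree rather than just its leaves, then a full binary tree of depth $d$ has at most $2^{d+1}-1$ nodes, so the displayed bound $2^d$ is correct precisely under the leaf-counting measure, which is the one under which Theorem~\ref{thm:dtree} is phrased. All of the genuine content lies in that folklore equivalence, which we are entitled to assume.
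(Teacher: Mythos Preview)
Your argument is correct and matches the paper's approach: the paper simply states that the corollary is immediate from Theorem~\ref{thm:dtree}, and your write-up spells out exactly the one-line reasoning (a depth-$d$ binary tree has at most $2^d$ leaves, so the minimum decision-tree size, and hence the minimum tree-like refutation size, is at most $2^d$). Your remark about the leaf-counting convention is apt and is the only point where any care is needed.
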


\section{Hypergraph Separator Theorem}

In this section we present our main technical tool concerning the structure of hypergraphs.

\begin{defi}
Let $H = (V, E)$ be a hypergraph. A balanced separator for $H$ is a set $R \subseteq E$ such that any connected component in $(V, E \setminus R)$ is incident to at most $|E|/2$ edges from $E \setminus R$.
\end{defi}

Note that any set $R \subseteq E$ of size $|E|/2$ is trivially a balanced separator. Therefore the question is whether it is possible to get balanced separators of size strictly less than $|E|/2$. We show that when the hypergraph has bounded vertex degree this is indeed possible. 

We will first show that when the maximum vertex degree of the hypergraph is small enough, small balanced separators do exist. We then show that our bound is quantitatively tight.


We need the following concentration bound. This is a consequence of an inequality due to Alon, Kim and Spencer~\cite{MR1469109} which was used by Dellamonica and R{\"o}dl~\cite{MR4002718}.

\begin{lemC}[\cite{MR1469109,MR4002718}]
\label{lm:concen}
Let $X_1, \ldots, X_n$ be independent Bernoulli variables with $\Pr[X_i = 1] = p$ for all $i \in [n]$. Let $c > 0$ and assume that there is a function $f : \{0, 1\}^n \rightarrow \R$ such that for all $(x_1, \ldots, x_n) \in \{0, 1\}^n$ and every $i \in [n]$, $$|f(x_1, \ldots, x_n) - f(x_1, \ldots, x_{i - 1}, 1 - x_i, x_{i + 1}, \ldots, x_n)| \le c.$$ Then for $\sigma^2 := nc^2p(1-p)$ and any $0 < \alpha < 2\sigma/c$, we have $$\Pr[|X - \ex[X]| \ge \alpha \sigma] \le 2e^{-\alpha^2/4},$$ where $X := f(X_1, \ldots, X_n)$.
\end{lemC}

\newcounter{tmp}
\setcounter{tmp}{\value{thm}}

\setcounter{thm}{\value{hyp-sep}}

\begin{thm}
\label{thm:sep}

Let $r \ge 2$ be fixed and let $H = (V, E)$ be a hypergraph with $m$ hyperedges and maximum vertex degree $k = o(\sqrt{m})$ where each hyperedge has size at most $r$. Then $H$ has a balanced separator of size at most $(\frac{1}{2} - \epsilon_r)m + o(m)$, where $\epsilon_r = (1 - 2^{-1/r})^r$. Furthermore such a balanced separator can be found by a randomized algorithm in expected polynomial time and by a deterministic algorithm in time $m^{O(1)}2^{(1-2\epsilon_r^2)m + o(m)}$.

\end{thm}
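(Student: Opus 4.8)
The plan is to produce the separator by a single random two-coloring of the vertices, tuned so that exactly half of the edges survive on one side. First I would reduce to the $r$-uniform case: pad every hyperedge of size $s < r$ with $r - s$ fresh vertices, each of degree $1$. These padding vertices lie in a single edge, so they never connect two edges and thus leave the connected-component structure on the original vertices unchanged; they also do not raise $\Delta(H)$ (assuming $\Delta(H) \ge 1$). Hence a balanced separator of the padded, genuinely $r$-uniform hypergraph is a balanced separator of $H$ with the same edge count, and from now on I assume $|e| = r$ for all $e \in E$. Now put each vertex into a set $S \subseteq V$ independently with probability $p := 2^{-1/r}$, and let $R := E(S, V \setminus S)$ be the crossing edges. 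The key structural observation is that $E \setminus R$ splits into the edges lying wholly inside $S$ and those lying wholly inside $V \setminus S$; since these two families are vertex-disjoint, every connected component of $(V, E \setminus R)$ is contained either in $S$ or in $V \setminus S$.

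Writing $E_S$ and $E_{\bar S}$ for the two monochromatic families, I would compute the expectations. By the choice $p = 2^{-1/r}$ we get $\ex[|E_S|] = p^r m = m/2$, while $\ex[|E_{\bar S}|] = (1-p)^r m = \epsilon_r m$ with $\epsilon_r = (1 - 2^{-1/r})^r$, and therefore $\ex[|R|] = m - \ex[|E_S|] - \ex[|E_{\bar S}|] = (\tfrac12 - \epsilon_r)m$. Because every component sits inside $S$ or inside $V \setminus S$, its number of edges is at most $|E_S|$ or at most $|E_{\bar S}|$; as soon as both of these are $\le m/2$, the balance requirement of the separator holds automatically. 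This is the point I find clever and would stress: the balance condition comes for free from the two-sided split, and the gain of $\epsilon_r m$ over the trivial bound $m/2$ is exactly the fraction of edges that fall entirely onto the lighter side $V \setminus S$.

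Next I would make these statements hold with high probability using Lemma~\ref{lm:concen}. For each of the three functions $|E_S|, |E_{\bar S}|, |R|$ of the vertex indicators, flipping one vertex changes the value by at most its degree, hence by at most $k$, so the Lipschitz constant is $c = k$. With $n = |V| \le rm$ vertices this gives $\sigma^2 = n c^2 p(1-p) = O(m k^2)$, i.e. $\sigma = O(k\sqrt{m})$, which is $o(m)$ precisely because $k = o(\sqrt m)$. I would therefore pick a deviation $t_m = o(m)$ with $t_m/\sigma \to \infty$ (feasible since $\sigma/m \to 0$), so that each of the three quantities deviates from its mean by more than $t_m$ with probability $2e^{-(t_m/\sigma)^2/4} \to 0$. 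A union bound then yields, with probability $\to 1$, $|R| \le (\tfrac12 - \epsilon_r)m + o(m)$, $|E_{\bar S}| \le \epsilon_r m + o(m) \le m/2$, and $|E_S| \le m/2 + o(m)$. To get the component bound down to a clean $m/2$ I would shift $p$ downward by an $o(1)$ amount so that $\ex[|E_S|] = m/2 - \omega(t_m)$, forcing $|E_S| \le m/2$ whp; this perturbs $\ex[|R|]$ only by $o(m)$, which is absorbed into the stated error term. This proves existence, and sampling $S$, computing $R$, and checking the size and the component sizes in polynomial time gives a randomized algorithm succeeding with probability $1 - o(1)$, hence running in expected polynomial time.

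Finally, for the deterministic algorithm I would simply enumerate all candidate separators of the guaranteed size. Having established that some $R \subseteq E$ with $|R| \le (\tfrac12 - \epsilon_r)m + o(m)$ is a balanced separator, it suffices to try every subset of $E$ of that cardinality and test each in polynomial time (compute the components of $(V, E \setminus R)$ and verify each carries at most $m/2$ edges). The number of such subsets is at most $m^{O(1)}\binom{m}{(1/2 - \epsilon_r)m + o(m)}$, and using the entropy estimate $H\!\left(\tfrac12 - \epsilon_r\right) \le 1 - 2\epsilon_r^2$ (valid since $1 - H(\tfrac12 - x) \ge \tfrac{2}{\ln 2}x^2 \ge 2x^2$) this is bounded by $m^{O(1)} 2^{(1 - 2\epsilon_r^2)m + o(m)}$, matching the claimed running time. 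I expect the genuinely delicate step to be the concentration argument of the third paragraph: the whole approach works only because the degree bound $k = o(\sqrt m)$ keeps the Lipschitz constant small enough that $\sigma = o(m)$, so that all three edge counts concentrate within $o(m)$ of their means simultaneously and the balance and size guarantees survive together.
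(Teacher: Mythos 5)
Your proposal is correct and follows essentially the same route as the paper's proof: a random bipartition with vertex-inclusion probability $p = 2^{-1/r}$, concentration of the induced and crossing edge counts via Lemma~\ref{lm:concen} (Lipschitz constant $k$, hence deviation $O(k\sqrt{m}) = o(m)$, which is where $k \in o(\sqrt{m})$ enters), and exhaustive enumeration with the entropy bound $h(\tfrac12 - \epsilon_r) \le 1 - 2\epsilon_r^2$ for the deterministic algorithm. Your two deviations---padding short hyperedges with fresh degree-one vertices instead of the paper's blocks of size $k$, and enforcing the balance condition by shifting $p$ down by $o(1)$ rather than by moving $o(m)$ extra edges from each side into the separator as the paper does---are harmless technical variants of the same argument.
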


For any $1\le k\le m$ the proof of the theorem actually gives a balanced separator of size at most $(\frac{1}{2} - \epsilon_r)m + C_r k \sqrt{m}$, where $C_r>0$ is a universal constant depending only on $r$. We will use this bound later to bound the size of tree-like resolution of Tseitin formulas.

Notice, $\epsilon_r \ge \frac{1}{(2r)^r}$, as for $x\in [0,1]$, $2^{-x} \le 1-\frac{x}{2}$ so $2^{-\frac{1}{r}} \le 1-\frac{1}{2r}$ and thus $(1-2^{-\frac{1}{r}}) \ge 1-(1-\frac{1}{2r}) = \frac{1}{2r}$.

\begin{proof}

We first observe that with a small modification we may assume without loss of generality that the hypergraph is $r$-uniform and contains no isolated vertices. To make the hypergraph $r$-uniform we partition the hyperedges of size less than $r$ into blocks of size $k$. For each of these blocks we introduce $r$ new vertices and we add sufficiently many of them to the hyperedges in the block to make them $r$-uniform. This guarantees that the maximum degree remains at most $k$ and we have added at most $mr/k$ new vertices. We then remove isolated vertices if any exists. We will denote by $n$ the new number of vertices. As there are no isolated vertices $n\le mr$.

The main idea for building the separator is to find a set $S$ of vertices such that $|E(S)| = m/2 \pm o(m)$ and $|E(S, \overline{S})| \le (\frac{1}{2} - \epsilon_r)m \pm o(m)$. Observe that for such $S$, we also have $|E(\overline{S})| \le m/2 + o(m)$. Furthermore note that $E(S)$ and $E(\overline{S})$ are separated by $R := E(S, \overline{S})$, i.e., every connected component in $(V, E \setminus R)$ is entirely contained in $S$ or in $\overline{S}$. Then we arbitrarily select two sets $W_1 \subseteq E(S)$ and $W_2 \subseteq E(\overline{S})$ with $|W_1|, |W_2| \le o(m)$ such that $|E(S) \setminus W_1| \le m/2$ and $|E(\overline{S}) \setminus W_2| \le m/2$ (which is possible by the assumption on $S$). It follows that $R \cup W_1 \cup W_2$ is a balanced separator of size at most $(1 - \epsilon_r)m/2 + o(m)$.

We pick the set $S \subseteq V$ by including each vertex in $S$ independently at random with probability $p := 2^{-1/r}$. Let $ V = [n]$. For each vertex $i \in [n]$, let $X_i$ be the random variable which takes value 1 if vertex $i$ is included in $S$, and it takes value 0 otherwise. Thus we have $S = \{i \in [n] : X_i = 1\}$. We define two functions $f_1$ and $f_2$ as follows: $$f_1(X_1, \ldots, X_n) := |E(\{i \in [n] : X_i = 1\})| = |E(S)|$$ and $$f_2(X_1, \ldots, X_n) := |E(\{i \in [n] : X_i = 1\}, \{i \in [n] : X_i = 0\})| = |E(S, \overline{S})|.$$  Hence observe that $$\ex[f_1(X_1, \ldots, X_n)] = p^rm = \frac{m}{2}$$ and $$\ex[f_2(X_1, \ldots, X_n)] = (1 - p^r - (1 - p)^r)m = \left(\frac{1}{2} - \epsilon_r\right)m.$$

 We would like to apply Lemma \ref{lm:concen} on $f_1$ and $f_2$. Note that since the maximum degree of $H$ is at most $k$, for any $b \in [2]$ and $i \in [n]$, $$|f_b(X_1, \ldots , X_n) - f_b(X_1, \ldots, X_{i - 1}, 1 - X_i, X_{i + 1}, \ldots, X_n)| \le k.$$

 Setting $\alpha = 4$, $c = k$ and $\sigma^2 = nk^2p(1-p)$ we apply Lemma \ref{lm:concen} on $f_1$ and $f_2$ and obtain $$\Pr\left[|f_1(X_1, \ldots, X_n) -\frac{m}{2}| \ge 4k\sqrt{np(1 - p)}\right] \le 2 e^{-4}$$ and $$\Pr\left[|f_2(X_1, \ldots, X_n) - \left(\frac{1}{2} - \epsilon_r\right)m |\ge 4k\sqrt{np(1 - p)}\right] \le 2 e^{-4}.$$ 
 Since $r$ is fixed, $n\le mr$ and $k \in o(\sqrt{m})$ we have $4k\sqrt{np(1 - p)} \le 4k\sqrt{mr} = o(m)$. Therefore with probability at least $1 - 4e^{-4} > 0$, $|E(S)| = m/2 \pm o(m)$ and $|E(S, \overline{S})| \le (\frac{1}{2} - \epsilon_r)m + o(m)$ and hence there exists a choice of $S$ satisfying these properties. As explained earlier by adding at most $o(m)$ edges to $E(S, \overline{S})$ we obtain a balanced separator of size $(\frac{1}{2} - \epsilon_r)m + o(m)$.

It is clear that the above argument also yields a randomized algorithm for finding such a balanced separator. The probability that $S$ satisfies our desired properties is at least a constant and in polynomial time we can verify whether it indeed satisfies both those properties. Thus by repeating the random choice of $S$, in expected polynomial time we find our balanced separator.

The deterministic algorithm exhaustively checks all sets of at most $(\frac{1}{2} - \epsilon_r)m + o(m)$ edges to find a balanced separator. This has running time $m^{O(1)} {m\choose{{(\frac{1}{2}-\epsilon_r)m + o(m)}}} \le 2^{h(1/2 - \epsilon_r)m+o(m)}$, where $h(\cdot)$ is the binary entropy function. Using $h(1/2 - x/2) \le 1 - x^2/2$, we can bound the running time by $m^{O(1)}2^{(1 - 2\epsilon^2_r)m + o(m)}$.
\end{proof}

\setcounter{thm}{\value{tmp}}

We now show that Theorem \ref{thm:sep} is tight. We probabilistically construct a sparse hypergraph in which every sufficiently large set of vertices induces a large number of edges. We then argue that the best balanced separator essentially has to partition the vertices into two parts of a particular size.

\begin{lem}
\label{lm:randomhyp}

For every fixed $\alpha < 1$, $r \ge 2$, and for any $n$ and $k$ such that $k \in o(n)$, there exists an $r$-uniform hypergraph $H = (V, E)$ on $n$ vertices with the following properties:

\begin{enumerate}
\item $|E| = (1 \pm o(1))\frac{nk}{r}$
\item $\Delta(H) = O(k)$
\item For every $S \subseteq V$ with $|S| \ge \alpha n$, $|E(S)| = (1 \pm o(1)){(\frac{|S|}{n})^r}|E|$.
\end{enumerate}
\end{lem}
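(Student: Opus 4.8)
The plan is to construct $H$ by a random process and verify all three properties via concentration, using the hypergraph separator theorem's own toolkit (Lemma~\ref{lm:concen}) where possible. Concretely, I would build $E$ by independently including each of the $\binom{n}{r}$ possible $r$-element subsets of $V$ as a hyperedge with probability $q$ chosen so that the expected number of edges is the target value: since there are $\binom{n}{r} \approx n^r/r!$ candidate edges and we want $|E| \approx nk/r$, set $q := \frac{nk/r}{\binom{n}{r}} = \Theta\!\left(k/n^{r-1}\right)$. This gives $\ex[|E|] = (1+o(1))\frac{nk}{r}$ immediately, which is property (1) in expectation. An alternative, cleaner model is to choose a uniformly random collection of exactly $\lfloor nk/r\rfloor$ edges, which fixes $|E|$ deterministically and handles (1) outright; I would likely adopt this to avoid carrying an error term through the rest of the argument.

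For property (3), fix a set $S$ with $|S| \ge \alpha n$ and let $s := |S|$. In the independent model, $\ex[|E(S)|] = q\binom{s}{r} = (1+o(1)) (s/n)^r |E|$, since $\binom{s}{r}/\binom{n}{r} = (1+o(1))(s/n)^r$ when $s = \Theta(n)$. The crucial point is then to show $|E(S)|$ concentrates tightly around this mean for \emph{every} large $S$ simultaneously. I would prove a single-$S$ concentration bound that beats $2^{-n}$ (the number of candidate sets $S$) and then union bound over all $S \subseteq V$. Here $|E(S)|$ is a function of the independent edge-indicator variables; flipping one edge indicator changes $|E(S)|$ by at most $1$, so a bounded-difference/Azuma estimate or Lemma~\ref{lm:concen} applied to the edge indicators gives deviations of order $\sqrt{\ex[|E(S)|]} \cdot \mathrm{polylog}$, which is $o\!\left((s/n)^r|E|\right)$ because $|E(S)| \to \infty$. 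Since the deviation probability is of the form $\exp(-\Theta(\text{poly in } k))$ and I need it below $2^{-n}$, I would need $\ex[|E(S)|] = \Theta(k \cdot (s/n)^r \cdot n/r)$ to grow fast enough; this is where the hypothesis $k \in \omega(\cdot)$ relative to the union-bound count matters. This tension is the main obstacle.

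The main difficulty, therefore, is calibrating the concentration so that a union bound over the (up to) $2^n$ choices of $S$ survives. The relative error I can afford in (3) is any $o(1)$, but the \emph{additive} fluctuation of $|E(S)|$ is roughly $\sqrt{\ex[|E(S)|]}$ up to a factor governing the union-bound strength; to push the failure probability below $2^{-n}$ I must take the deviation parameter ($\alpha$ in Lemma~\ref{lm:concen}) to be of order $\sqrt{n}$, which forces $\ex[|E(S)|] \gtrsim n$. Since $\ex[|E(S)|] = \Theta(k(s/n)^r)\cdot n/r \gtrsim \Theta(k)\cdot n$, this holds precisely because $k \to \infty$ is implied by $k \in o(n)$ together with $k$ being large enough for the edge count to be meaningful; if $k$ were bounded I would instead restrict attention to the regime where $k\to\infty$, which is the interesting one for tightness. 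I would verify that the resulting additive error $\alpha\sigma = \Theta(\sqrt{n}\cdot\sqrt{\ex[|E(S)|]})$ is still $o(\ex[|E(S)|])$, i.e.\ $o((s/n)^r|E|)$, which reduces to $\sqrt{n} = o(\sqrt{\ex[|E(S)|]})$, again equivalent to $\ex[|E(S)|] = \omega(n)$.

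Finally, for property (2) I would bound the maximum degree: the degree of a fixed vertex $v$ is $\sum_{e \ni v} \mathbf{1}[e \in E]$, a sum of $\binom{n-1}{r-1} \approx n^{r-1}/(r-1)!$ independent indicators each with probability $q$, so $\ex[\deg(v)] = q\binom{n-1}{r-1} = (1+o(1))k$. A Chernoff bound gives $\deg(v) = O(k)$ with failure probability $\exp(-\Theta(k))$ per vertex; union bounding over $n$ vertices needs $\exp(-\Theta(k)) \cdot n = o(1)$, i.e.\ $k = \omega(\log n)$, which I would either assume or absorb into the $O(k)$ slack by choosing the Chernoff constant generously. Combining (1), (2), and (3)—each failing with probability $o(1)$ (or $2^{-\Omega(n)}$ for (3))—a union bound shows a single random $H$ satisfies all three with positive probability, completing the existence proof.
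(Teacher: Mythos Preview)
Your proposal follows essentially the same strategy as the paper: sample from the binomial random hypergraph $G^{(r)}(n,q)$ with $q = \frac{nk/r}{\binom{n}{r}}$, verify (1) and (3) by concentration plus a union bound over all large $S$, and control the maximum degree separately. For (3) the paper uses Chernoff directly on $|E(S)|$ (which is a plain sum of independent Bernoullis) with multiplicative error $\delta = k^{-1/3}$; your bounded-difference route via Lemma~\ref{lm:concen} also works and yields the same qualitative bound, so there is no real difference there.

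The one substantive divergence is your treatment of (2). Your Chernoff-plus-union-bound over vertices needs $k = \omega(\log n)$ to push the per-vertex failure probability below $1/n$, and ``absorbing into the $O(k)$ slack'' does not rescue this when $k$ grows more slowly than $\log n$: for small $k$ the maximum degree of $G^{(r)}(n,q)$ is genuinely $\Theta(\log n/\log\log n)$, not $O(k)$. The paper avoids this by a different mechanism: rather than bounding $\Pr[\Delta(G)>2ek]$, it bounds the \emph{expected number of edges} incident to any vertex of degree at least $2ek$ by $O(nk/2^k)$, uses Markov to make this hold with constant probability, and then \emph{deletes} those $o(|E|)$ edges. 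This gives $\Delta(H)=O(k)$ deterministically after the deletion, costs only an additive $o(|E|)$ in (1) and (3), and works for any $k\to\infty$ however slowly. This edge-removal trick is the one idea in the paper's proof that your outline is missing.
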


\begin{proof}
We first sample $G$ from $G^{(r)}(n, q)$ where $q = \frac{nk/r}{{n \choose r}}$, that is we construct $G$ by choosing each possible hyperedge of size $r$ independently with probability $q$. The expected number of edges is $\frac{nk}{r}$. Thus by Chernoff's bound with probability at least $1 - 2^{-\Omega(n)}$, $|E| = (1 \pm o(1))\frac{nk}{r}$.

 Next we bound the number of edges which are incident to vertices of degree at least $2ek$. For every vertex the probability that it has degree at least $t$ is at most ${{{n - 1} \choose {r - 1}}\choose t}q^t$. Note that for $t \ge 2ek$ this probability is at most $2^{-t}$. Therefore the expected number of edges incident with some vertex of degree at least $2ek$ is at most $$n\sum_{t = 2ek}^{\infty} t 2^{-t} =  O(nk/2^{k}).$$

Therefore by Markov's inequality with constant probability the number of edges incident with some vertex of degree at least $2ek$ is at most $O(nk/2^k)$.

Let $S \subseteq V$ be any set of size $\alpha n$. The expected number of edges in $S$ is ${\alpha n \choose r}q$. Let $\delta = k^{-1/3} = o(1)$. By Chernoff's bound 

\begin{eqnarray*}
\Pr\left[|E(S)| \neq (1 \pm \delta){\alpha n \choose r}q\right] &\le& 2\exp\left(\frac{-\delta^2 {\alpha n \choose r}q}{3}\right)\\
&=& 2 \exp\left(\frac{-\delta^2 {\alpha n \choose r}nk}{3r{n \choose r}}\right)\\
&\le& 2\exp\left(\frac{-(1 \pm o(1))\alpha^r nk^{1/3}}{3r}\right), 
\end{eqnarray*}
where in the last inequality we use $\frac{{{\alpha n} \choose r}}{{n \choose r}} = (1 \pm o(1))\alpha^r$ since $r$ is fixed. 
Since there are $n \choose \alpha n$ choices for $S$, the probability that there exists $S \subseteq V$ of size $\alpha n$ with $|E(S)| \neq (1 \pm \delta){\alpha n \choose r}q$ is at most $${n \choose \alpha n} \times 2\exp\left(\frac{-(1 \pm o(1))\alpha^r nk^{1/3}}{3r}\right) = o(1).$$

It follows that there exists an $r$-uniform hypergraph $G$ with $(1 \pm o(1))\frac{n}{r}k$ edges, at most $\gamma n$ of which are incident with some vertex of degree at least $2ek$, where $\gamma = O(k/2^k)$. Furthermore for every $S \subseteq V$ with $|S| = \alpha n$, $|E(S)| = (1 \pm \delta){\alpha n \choose r}q$. We remove at most $\gamma n$ edges to make the maximum vertex degree at most $O(k)$. Call the resulting hypergraph $H = (V, E)$. Note that $|E(H)| = (1 \pm o(1))\frac{n}{r}k$. Let $\delta' = \delta + \gamma$. We have in $H$ for every $S \subseteq V$ with $|S| = \alpha n$, $|E(S)| = (1 \pm \delta'){\alpha n \choose r}q$. A simple averaging argument further gives that for every $S \subseteq V$ with $|S| \ge \alpha n$, $E(S) = (1 \pm \delta'){|S| \choose r}q = (1 \pm o(1))(\frac{|S|}{n})^r|E|$.
\end{proof}

\begin{thm}
\label{thm:sep-lowerbound}
For every fixed $r \ge 2$ and $k, m \rightarrow \infty$ with $k \in o(m)$, there exists an $r$-uniform hypergraph with vertex degree $O(k)$ and $m$ edges such that any balanced separator of $H$ has size at least $(\frac{1}{2} - \epsilon_r)m(1 \pm o(1))$, where $\epsilon_r := (1 - 2^{-1/r})^r$.
\end{thm}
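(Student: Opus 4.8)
The plan is to take the hypergraph produced by Lemma~\ref{lm:randomhyp} and show that its ``uniform density'' property (item~3) forces every balanced separator to be large. First I would fix a small constant $\alpha>0$ (depending only on $r$) and instantiate Lemma~\ref{lm:randomhyp} with a degree parameter $k'\le k$ chosen so that $k'\to\infty$, $k'=o(\sqrt m)$, and $n:=mr/k'$ vertices; this yields $|E|=(1\pm o(1))m$ (trimming a negligible number of edges to reach exactly $m$) and $\Delta(H)=O(k)$. Writing $p:=2^{-1/r}$ and $g(x):=1-x^r-(1-x)^r$, I record the elementary facts $p^r=\frac{1}{2}$ and $g(p)=g(1-p)=\frac{1}{2}-\epsilon_r$, that $g$ is symmetric about and maximized at $x=\frac{1}{2}$ so that $g(x)\ge\frac{1}{2}-\epsilon_r$ holds precisely on $[1-p,p]$, and the inequality $2p-1\ge 1-p$ (equivalently $p\ge\frac{2}{3}$), which holds for every $r\ge2$ and is what lets the regimes below fit together.

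Next I would reduce the problem to choosing a good bipartition of the vertex set. Let $R$ be any balanced separator and let $C_1,\dots,C_t$ be the connected components of $(V,E\setminus R)$, each carrying at most $m/2$ edges of $E\setminus R$. For any grouping of the $C_i$ into two vertex sets $A,B$, every edge crossing between $A$ and $B$ lies in $R$, so $|R|\ge |E(A,B)| = m-|E(A)|-|E(B)|$; moreover, for any single component $C$ the separator condition gives $|E(C)\cap R|\ge |E(C)|-m/2$. When $|A|,|B|\ge\alpha n$, item~3 of Lemma~\ref{lm:randomhyp} turns the first bound into $|R|\ge m\,g(|A|/n)(1-o(1))$. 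The whole game is therefore to exhibit, for every $R$, a grouping whose resulting bound is at least $(\frac{1}{2}-\epsilon_r)m(1-o(1))$.

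I would then split on the largest component $C^\ast$, with $y^\ast:=|C^\ast|/n$. If $y^\ast\le 2p-1$, a greedy ``add to the lighter side'' grouping balances vertex counts to within $y^\ast n$, placing $|A|/n\in[\frac{1-y^\ast}{2},\frac{1+y^\ast}{2}]\subseteq[1-p,p]$, whence $|R|\ge m\,g(|A|/n)(1-o(1))\ge(\frac{1}{2}-\epsilon_r)m(1-o(1))$. If $2p-1<y^\ast<p$, I instead take $A=C^\ast$; here $2p-1\ge 1-p$ guarantees $y^\ast\in[1-p,p]$, and the same density bound applies (both sides have at least $(1-p)n\ge\alpha n$ vertices). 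Finally, if $y^\ast\ge p$, I combine the internal and crossing contributions, which are disjoint subsets of $R$: since $E(C^\ast,\overline{C^\ast})\subseteq R$ and $|E(C^\ast)\cap R|\ge|E(C^\ast)|-m/2$, telescoping gives $|R|\ge (|E(C^\ast)|-m/2)+(m-|E(C^\ast)|-|E(\overline{C^\ast})|)=m/2-|E(\overline{C^\ast})|$. As $y^\ast\ge p$ forces $|E(\overline{C^\ast})|\le(1+o(1))(1-p)^r m=(1+o(1))\epsilon_r m$ (via item~3 when $|\overline{C^\ast}|\ge\alpha n$), this again yields $|R|\ge(\frac{1}{2}-\epsilon_r)m(1-o(1))$.

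I expect the main obstacle to be gluing the three regimes seamlessly and treating the degenerate large-component case. The transition hinges on the non-obvious inequality $2p-1\ge 1-p$ for all $r\ge2$: it is exactly what places the ``medium'' interval $(2p-1,p)$ inside $[1-p,p]$ where $g\ge\frac{1}{2}-\epsilon_r$, so that a single overweight component can never leave a grouping with $|A|/n$ stranded outside $[1-p,p]$. The other delicate point is when $y^\ast>1-\alpha$, so that $\overline{C^\ast}$ is too small for item~3: there I would instead bound $|E(\overline{C^\ast})|\le \Delta(H)\cdot|\overline{C^\ast}| = O(k)\cdot\alpha n = O(\alpha m)$, and choose the constant $\alpha$ (depending only on $r$, and with $\alpha\le 1-p$) small enough that this error, together with the $o(m)$ fluctuations of the density estimate, leaves the final constant equal to $\frac{1}{2}-\epsilon_r$.
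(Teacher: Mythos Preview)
Your argument is correct and runs parallel to the paper's, with a different case split. Both proofs take the hypergraph of Lemma~\ref{lm:randomhyp} and reduce to lower-bounding the cut $|E(A,B)|$ across a suitable bipartition of the components of $(V,E\setminus R)$, and both ultimately rest on the inequality $1-2^{-1/r}\le \tfrac13$ for $r\ge 2$ (your formulation $2p-1\ge 1-p$). The paper, however, fixes $\alpha=1-p$, assumes $R$ is a \emph{minimum} balanced separator, and groups the components into $(A,B)$ minimizing $\bigl||A|-|B|\bigr|$; it then splits into only two cases according to whether the larger side $B$ is a single component. When it is, minimality is used to argue $|E(B)|\le m/2$, forcing $|B|\le (p+o(1))n$ and hence $|A|\ge(\alpha-o(1))n$, so Fact~\ref{fc:cutsize} applies; when $B$ is disconnected, a short exchange argument gives $|A|\ge n/3\ge\alpha n$. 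Your three-regime split on the largest component $C^\ast$ achieves the same end without appealing to minimality, at the price of the extra ``very large component'' regime $y^\ast\ge p$, where you explicitly count edges of $R$ internal to $C^\ast$ and treat the degenerate sub-case $|\overline{C^\ast}|<\alpha n$ by a crude degree bound (whence your need for a smaller $\alpha$). The paper's route is shorter; yours bounds every balanced separator directly and sidesteps the somewhat delicate claim that a minimum separator contains no edge inside a component.
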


We will show that the hypergraph $H = (V, E)$ given by Lemma \ref{lm:randomhyp} with $\alpha:= 1 - 2^{-1/r}$ satisfies this property. Note that $\epsilon_r = \alpha^r$ and $(1 - \alpha)^r = 1/2$. Let $m = |E|$. The following fact proves the result for the case when the balance separator is a bipartition. As it turns out and which we will show later, this is also the core of the argument for the general case.

\begin{fact}
\label{fc:cutsize}
Let $(A, B)$ be a bipartition of $H$ with $\min\{|A|, |B|\} \ge \alpha n$. Then $|E(A, B)| \ge (1/2 - \epsilon_r)m(1 \pm o(1))$.
\end{fact}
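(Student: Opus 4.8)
The plan is to express the cut size in terms of the induced edge counts and then reduce the bound to a one-variable extremal estimate. First I would use the fact that every hyperedge of $H$ lies either entirely in $A$, entirely in $B$, or meets both parts, which gives the exact identity
$$m = |E(A)| + |E(B)| + |E(A, B)|,$$
so that $|E(A, B)| = m - |E(A)| - |E(B)|$. The task thus becomes an upper bound on $|E(A)| + |E(B)|$.

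Since $(A, B)$ is a bipartition with $\min\{|A|, |B|\} \ge \alpha n$, both $A$ and $B$ satisfy the hypothesis of condition (3) of Lemma~\ref{lm:randomhyp}. Writing $a := |A|/n$ and $b := |B|/n$, that property yields $|E(A)| = (1 \pm o(1))a^r m$ and $|E(B)| = (1 \pm o(1))b^r m$, with the relative error uniform over all such sets. Combining these with the identity above gives
$$|E(A, B)| = \left(1 - a^r - b^r\right)m \pm o(m).$$
Because $A$ and $B$ partition $V$ we have $a + b = 1$, and the size constraint forces $\alpha \le a \le 1 - \alpha$.

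It then remains to show $1 - a^r - b^r \ge 1/2 - \epsilon_r$ on this range, equivalently that $g(a) := a^r + (1 - a)^r$ satisfies $g(a) \le 1/2 + \epsilon_r$ for $a \in [\alpha, 1 - \alpha]$. Here I would invoke convexity: for $r \ge 2$ the map $x \mapsto x^r$ is convex, hence $g$ is convex on $[\alpha, 1 - \alpha]$ and attains its maximum at an endpoint of the interval. By the symmetry $g(a) = g(1 - a)$ the two endpoint values coincide, so $g(a) \le g(\alpha) = \alpha^r + (1 - \alpha)^r$. The choice $\alpha = 1 - 2^{-1/r}$ was made precisely so that $(1 - \alpha)^r = 1/2$ and $\alpha^r = \epsilon_r$, whence $g(\alpha) = 1/2 + \epsilon_r$. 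Substituting back gives $1 - a^r - b^r \ge 1/2 - \epsilon_r$, and since this is a positive constant (as $\alpha < 1/2$ for $r \ge 2$) the additive $o(m)$ term is absorbed, yielding $|E(A, B)| \ge (1/2 - \epsilon_r)m(1 \pm o(1))$ as claimed.

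The calculation is otherwise routine; the only real content is the extremal step, where one must verify that $g$ is maximized at the extreme admissible split $a = \alpha$ rather than at the balanced split $a = 1/2$. Convexity makes this transparent, and it is exactly this endpoint behaviour, pinned down by the defining relation $(1 - \alpha)^r = 1/2$, that matches the separator size $(1/2 - \epsilon_r)m$ from Theorem~\ref{thm:sep} and thereby establishes tightness.
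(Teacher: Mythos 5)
Your proof is correct and follows essentially the same route as the paper's: decompose $m = |E(A)| + |E(B)| + |E(A,B)|$, apply property (3) of Lemma~\ref{lm:randomhyp} to both parts, and reduce to the one-variable bound $a^r + (1-a)^r \le \alpha^r + (1-\alpha)^r = 1/2 + \epsilon_r$ on $[\alpha, 1-\alpha]$. The only cosmetic difference is that you settle the extremal step via convexity and symmetry of $g(a) = a^r + (1-a)^r$, whereas the paper notes that $1 - x^r - (1-x)^r$ is increasing on $(0,\frac{1}{2})$ after normalizing $|A| \le |B|$; both verifications are immediate and equivalent.
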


\begin{proof}

Assume $|A| \le |B|$ and $|A| = \gamma n$. Since $\gamma \ge \alpha$, Lemma \ref{lm:randomhyp} guarantees that $|E(A)| = \gamma^rm(1 \pm o(1))$ and $|E(B)| = (1 - \gamma)^rm(1 \pm o(1))$ and hence 

\begin{eqnarray*}
|E(A, B)| &=&(1 - \gamma^r - (1 - \gamma)^r)m(1 \pm o(1))\\
&\ge& (1 - \alpha^r - (1 - \alpha)^r)m(1 \pm o(1))\\
&=& (\frac{1}{2} - \epsilon_r)m(1 \pm o(1)),
\end{eqnarray*}
where the inequality follows since the function $1 - x^r - (1 - x)^r$ is increasing for $x \in (0, \frac{1}{2})$.
\end{proof}

\begin{proof}[Proof of Theorem \ref{thm:sep-lowerbound}]
Let $R \subseteq E$ be a balanced separator in $H$ of minimum size. The removal of edges in $R$ breaks $H$ into two or more connected components each with at most $m/2$ edges. By minimality of $R$ these connected components are induced subgraphs. We group these connected components in two parts $A$ and $B$ such that $||A| - |B||$ is minimized. Assume that $|A| \le |B|$. We have two cases. Either $B$ is connected or it contains more than one connected component. Note that $R \supseteq E(A, B)$. Assume first that $B$ is connected. We have $|A| \ge (\alpha + o(1))n$ since otherwise $|B| > (1 - \alpha + \mu)n$ for some $\mu > 0$ and hence $|E(B)| \ge (1 - \alpha + \mu +o(1))^rm(1 \pm o(1))> m/2$ by the choice of $\alpha$, contradicting that $|E(B)| \le m/2$. Fact \ref{fc:cutsize} then implies that $|R| \ge |E(A, B)| \ge (\frac{1}{2} - \epsilon_r)m(1 \pm o(1))$.

Now assume that $B$ contains more than one connected component. Thus we can write $B = B_1 \cup B_2$, where $B_1$ and $B_2$ is an arbitrary bipartition of these components. Assume $|B_1| \ge |B_2|$. We show that $|A| \ge |B_1|$. Assume for a contradiction that this is not the case. Then $|A| < \min\{|A \cup B_2|, |B_1|\}$ and further we have $\max\{|A \cup B_2|, |B_1|\} \le |B_1 \cup B_2|$. This means that $A \cup B_2$ and $B_1$ give a more balanced bipartition, contradicting the minimality of $||A| - |B||$. Since $|A| + |B_1| + |B_2| = n$ and $|A| \ge |B_1| \ge |B_2|$, we have $|A| \ge n/3 \ge \alpha n$ (recall that $\alpha  = 1 - 2^{-1/r}$ and $\alpha \le 1/3$ when $r \ge 2$). Once again we can apply Fact \ref{fc:cutsize} to conclude that $|R| \ge |E(A, B)| \ge (\frac{1}{2} - \epsilon_r)m(1 \pm o(1))$.
\end{proof}

\section{A CSP-SAT Algorithm}

A $(d, k)$-CSP $\Psi$ is defined by a set of variables $X$ taking values in an alphabet $\Sigma$ of size $d$ and a set $\calC$ of constraints each on most $k$ of these variables. We write $\Psi = (X, \calC)$ to specify the variables and the constraint set. We will assume that the CSP is represented by the set of truth tables of its constraints. Observe that a $(2, k)$-CSP can be represented as a $k$-CNF. An assignment to the variables {\it satisfies} $\Psi$ if it satisfies every constraint. The variable frequency of $\Psi$ is the largest number of constraints that any variable appears in. Given a partial assignment $\rho$ which gives values to a set $Y \subseteq X$, the restriction of $\Psi$ by $\rho$ is denoted by $\Psi|_{\rho}$ which is a CSP on $X \setminus Y$ and each constraint is restricted by fixing the values of variables in $Y$ by $\rho$.

Let $\Psi = (X, \calC)$ be a CSP. We construct a hypergraph $H_{\Psi} = (V, E)$ as follows. We set $V = \calC$, that is every constraint is represented by a vertex in $H$. For every variable $x \in X$ we create a hyperedge $e_x:= \{C \in \calC : x \in C\}$, that is the set of constraints containing $x$ form a hyperedge.

\begin{prop}
\label{prp:csp-component}
Assume that $H_{\Psi}$ consists of connected components $H_1, \ldots, H_t$. Then $\Psi$ can be expressed as $\wedge_{i = 1}^t \Psi_i$ where $H_{\Psi_i} = H_i$ for each $1 \le i \le t$. 
\end{prop}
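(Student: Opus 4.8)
The plan is to establish a bijective correspondence between the variables and constraints of $\Psi$ on the one hand, and the vertices and hyperedges of the connected components $H_1, \ldots, H_t$ on the other. Recall that in $H_\Psi = (V, E)$ we have $V = \calC$ (one vertex per constraint) and for each variable $x \in X$ a hyperedge $e_x = \{C \in \calC : x \in C\}$ (the constraints in which $x$ occurs). Since each connected component $H_i = (V_i, E_i)$ is a vertex-disjoint piece of $H_\Psi$, the vertex set $V_i$ is a subset of $\calC$ and thus corresponds to a subset $\calC_i \subseteq \calC$ of constraints. I would let $X_i \subseteq X$ be the set of variables $x$ whose hyperedge $e_x$ lies in $E_i$, and define $\Psi_i := (X_i, \calC_i)$.

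The key step is to verify that this partition is well-defined and that the $\Psi_i$ genuinely decompose $\Psi$. First I would argue that every constraint belongs to exactly one component, which is immediate since the $V_i$ partition $V = \calC$. Next, the crucial claim is that every variable of $\Psi$ belongs to $X_i$ for exactly one $i$: a variable $x$ corresponds to the hyperedge $e_x$, and since a hyperedge's vertices all lie within a single connected component (any two constraints sharing variable $x$ are adjacent in $H_\Psi$, hence in the same component), the whole edge $e_x$ is contained in one $E_i$. This shows the $X_i$ also partition $X$. Consequently each constraint $C \in \calC_i$ depends only on variables in $X_i$, because the variables occurring in $C$ are exactly those $x$ with $C \in e_x$, and all such edges are incident to the vertex $C \in V_i$, hence lie in $E_i$; so $C$ is a legitimate constraint of $\Psi_i$.

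With the partition in hand, I would then confirm the two remaining assertions. To see $H_{\Psi_i} = H_i$, note that by construction the vertices of $H_{\Psi_i}$ are exactly $\calC_i = V_i$ and its hyperedges are $\{e_x : x \in X_i\} = E_i$, with the incidence structure inherited unchanged, so the two hypergraphs coincide. Finally, since the constraint sets $\calC_i$ partition $\calC$ and each constraint appears in exactly one $\Psi_i$, we have $\Psi = \bigwedge_{i=1}^t \Psi_i$ as a conjunction of all constraints, and because the variable sets $X_i$ are pairwise disjoint the sub-CSPs share no variables. I expect the only point requiring genuine care—the main obstacle, though it is mild—is the observation that each hyperedge $e_x$ lies entirely within one component, i.e.\ that variables cannot straddle two components; this is exactly the statement that two constraints sharing a variable are connected in $H_\Psi$, which follows directly from the definition of the hyperedge $e_x$ and of connectivity.
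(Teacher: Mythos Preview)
Your proposal is correct and follows essentially the same approach as the paper. The paper's proof is a single sentence noting that constraints $C$ and $C'$ lying in different components $H_i$ and $H_j$ cannot share a variable; your write-up is a fully unpacked version of this same observation, explicitly constructing the $\Psi_i$ and verifying $H_{\Psi_i}=H_i$ and $\Psi=\bigwedge_i\Psi_i$.
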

\begin{proof}
This is immediate once we observe that for any constraints $C$ and $C'$ which are represented in $H_i$ and $H_j$, respectively, for some $i \ne j$, $C$ and $C'$ do not have any variable in common.
\end{proof}

\begin{prop}
\label{prp:csp-restriction}
Let $\Psi = (X, \calC)$ be a CSP and $H_{\Psi} = (V, E)$ be the corresponding hypergraph. Let $\rho$ be a partial assignment which gives values to a set $Y \subseteq X$. Then $H_{\Psi|_{\rho}}$ is obtained by removing all $e_x$ with $x \in Y$ from $H_{\Psi}$. Furthermore, if $\Psi$ is unsatisfiable so is $\Psi|_{\rho}$ and consequently if $\Psi|_{\rho}$ breaks into $\wedge_{i = 1}^t \Psi_i$ as in Proposition \ref{prp:csp-component}, then at least one of $\Psi_i$s is unsatisfiable.
\end{prop}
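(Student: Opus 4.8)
The plan is to prove Proposition~\ref{prp:csp-restriction} in three parts, each of which unwinds the relevant definitions. First I would establish the structural claim that $H_{\Psi|_{\rho}}$ is obtained from $H_{\Psi}$ by deleting the hyperedges $e_x$ for $x \in Y$. Recall that $H_{\Psi}$ has vertex set $\calC$ and one hyperedge $e_x = \{C \in \calC : x \in C\}$ for each variable $x \in X$. The restriction $\Psi|_{\rho}$ is a CSP on the variable set $X \setminus Y$: its constraints are exactly the constraints of $\Psi$ with the variables in $Y$ fixed according to $\rho$, so the vertex set of $H_{\Psi|_{\rho}}$ is (a copy of) $\calC$ as well, and its hyperedges are indexed by the surviving variables $x \in X \setminus Y$. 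For each such $x$, the set of constraints still mentioning $x$ after the restriction is unchanged, so the corresponding hyperedge is again $e_x$. Hence the hyperedge set of $H_{\Psi|_{\rho}}$ is precisely $\{e_x : x \in X \setminus Y\} = E \setminus \{e_x : x \in Y\}$, which is exactly the claim. (One should note in passing that fixing variables in $Y$ does not alter which variables from $X \setminus Y$ appear in a given constraint, which is the only point that needs care.)

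Next I would handle the monotonicity of unsatisfiability under restriction: if $\Psi$ is unsatisfiable then so is $\Psi|_{\rho}$. This is the standard fact that restricting can only shrink the set of satisfying assignments. Concretely, any total assignment $\tau$ to $X \setminus Y$ that satisfies $\Psi|_{\rho}$ could be combined with $\rho$ to give a total assignment $\rho \cup \tau$ to $X$; since each constraint of $\Psi|_{\rho}$ is the $\rho$-restriction of a constraint of $\Psi$ and is satisfied by $\tau$, the combined assignment $\rho \cup \tau$ satisfies every constraint of $\Psi$, contradicting unsatisfiability of $\Psi$. Therefore $\Psi|_{\rho}$ has no satisfying assignment.

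Finally, the consequence about connected components follows by combining the previous part with Proposition~\ref{prp:csp-component}. If $\Psi|_{\rho}$ decomposes as $\bigwedge_{i=1}^t \Psi_i$ with $H_{\Psi_i} = H_i$ the connected components of $H_{\Psi|_{\rho}}$, then the $\Psi_i$ are on pairwise disjoint variable sets (this disjointness is exactly what Proposition~\ref{prp:csp-component} records, since constraints in distinct components share no variable). A conjunction of CSPs on disjoint variable sets is satisfiable if and only if each conjunct is satisfiable, because one can assemble independent satisfying assignments componentwise. Since $\Psi|_{\rho}$ is unsatisfiable, at least one conjunct $\Psi_i$ must itself be unsatisfiable, as otherwise the componentwise union of satisfying assignments would satisfy $\Psi|_{\rho}$.

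None of the three steps presents a genuine obstacle; the proposition is essentially a bookkeeping statement and the whole difficulty is definitional. The step most worth stating carefully is the first one, where one must verify that restriction of the CSP corresponds exactly to hyperedge deletion in the associated hypergraph, i.e.\ that no new incidences are created and the surviving hyperedges are unchanged. The second and third steps are the familiar facts that restriction preserves unsatisfiability and that unsatisfiability of a variable-disjoint conjunction localizes to some conjunct, and they require only the observation that satisfying assignments on disjoint variable sets combine freely.
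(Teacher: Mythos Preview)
Your proposal is correct and follows essentially the same approach as the paper's own proof, which also handles the three parts in turn: restriction removes exactly the hyperedges of the fixed variables, restriction preserves unsatisfiability, and a variable-disjoint conjunction is satisfiable iff each conjunct is. Your write-up is simply more detailed than the paper's terse version, but there is no substantive difference in strategy.
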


\begin{proof}
After restricting some of the variables, those variables disappear and some constraints get simplified. But no new constraint is introduced and hence the hypergraph is obtained by removing the corresponding hyperedges. If a CSP is unsatisfiable, obviously it is unsatisfiable under any partial assignment. If an unsatisfiable CSP is decomposed into disjoint CSPs, at least one of these CSPs is unsatisfiable, since otherwise we can take a satisfying assignment from each part and since they are on disjoint sets of variables together they form a satisfying assignment of the whole CSP.
\end{proof}

We are now ready to describe our CSP-SAT algorithm.

\begin{algorithm}
\label{alg:csp-sat}
\caption{$\ralg(\Psi)$}

\begin{algorithmic}

\State Construct $H_{\Psi} = (V, E)$.

\State Construct a small balanced separator $R$ as in Theorem \ref{thm:sep} (either probabilistically or deterministically).

\ForAll{$\rho \in \Sigma^{R}$} 

\State Let $\Psi|_{\rho} = \wedge_{i = 1}^t \Psi_i$ as in Proposition \ref{prp:csp-restriction}.
\ForAll {$i \in [t]$} 
\State Exhaustively check if $\Psi_i$ is satisfiable.
\EndFor
\If{all $\Psi_i$'s are satisfiable} \Return satisfiable.
\EndIf
\EndFor\\
\Return unsatisfiable.
\end{algorithmic}

\end{algorithm}

\newcounter{tmp1}
\setcounter{tmp1}{\value{thm}}

\setcounter{thm}{\value{dpll-alg}}

\begin{thm}
\label{thm:csp-sat}
Let $r\ge 2$ be a fixed integer, $m, d, k\ge 1$ be integers such that $k \in o(\sqrt{m})$.
Let $\epsilon_r = (1 - 2^{-1/r})^r$.
Let $\Psi$ be an $m$-variable $(d, k)$-CSP with variable frequency at most~$r$.
$\ralg(\Psi)$ correctly decides the satisfiability of $\Psi$. Moreover, if $d\ge 3$ then it runs 
deterministically in time $d^{(1 - \epsilon_r)m + o(m)}$, if $d=2$ then
it runs in expected time $2^{(1 - \epsilon_r)m + o(m)}$ if we find $R$ randomly,
and in deterministic time $2^{(1 - 2\epsilon_r^2)m + o(m)}$ if we find $R$ deterministically.
\end{thm}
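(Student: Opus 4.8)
The plan is to reduce everything to the separator theorem already established. First I would verify that the hypergraph $H_\Psi$ constructed from $\Psi$ satisfies the hypotheses of Theorem \ref{thm:sep}. Since $\Psi$ has $m$ variables and each variable $x$ yields exactly one hyperedge $e_x$, the hypergraph $H_\Psi$ has exactly $m$ hyperedges. Because $\Psi$ has variable frequency at most $r$, every $e_x = \{C \in \calC : x \in C\}$ has size at most $r$. Finally, the degree of a vertex $C \in \calC$ is the number of variables appearing in the constraint $C$, which is at most $k = o(\sqrt m)$; hence $\Delta(H_\Psi) \le k$. Thus $H_\Psi$ meets all conditions of Theorem \ref{thm:sep}, and the separator $R$ produced in the second line of the algorithm has size $|R| \le (\frac{1}{2} - \epsilon_r)m + o(m)$.

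Next I would establish correctness. For a fixed $\rho \in \Sigma^R$, Proposition \ref{prp:csp-restriction} tells us that $H_{\Psi|_\rho}$ is $H_\Psi$ with the hyperedges of $R$ removed, so its connected components are exactly those of $(V, E \setminus R)$; by the separator property each such component is incident to at most $m/2$ edges of $E \setminus R$, i.e., each $\Psi_i$ in the decomposition $\Psi|_\rho = \bigwedge_{i=1}^t \Psi_i$ from Proposition \ref{prp:csp-component} has at most $m/2$ variables. For the logical correctness: if $\Psi$ is satisfiable, any satisfying assignment restricts to some $\rho$ on the $R$-variables under which every $\Psi_i$ is satisfiable, so the algorithm returns \emph{satisfiable}; conversely, if for some $\rho$ all the $\Psi_i$ are satisfiable, then since they are on disjoint variable sets their satisfying assignments combine with $\rho$ into a satisfying assignment of $\Psi$. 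Hence the algorithm returns \emph{satisfiable} exactly when $\Psi$ is.

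For the running time I would count the loop and the per-iteration cost separately. The outer loop ranges over $\Sigma^R$, hence has $d^{|R|} \le d^{(\frac{1}{2} - \epsilon_r)m + o(m)}$ iterations. Within each iteration the decomposition is computed in polynomial time, and since every component $\Psi_i$ has at most $m/2$ variables, exhaustively deciding it costs $d^{m/2}$ up to polynomial factors; because the components are on disjoint variables we test them independently and add their costs, so the whole iteration costs $d^{m/2}$ up to a polynomial overhead. Multiplying the iteration count by the per-iteration cost gives $d^{(1 - \epsilon_r)m + o(m)}$ for the loop. It then remains to add the cost of constructing $R$ and compare. For $d = 2$ with a randomized separator, construction takes expected polynomial time, so the total is the expected bound $2^{(1-\epsilon_r)m + o(m)}$.

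The point I expect to require the most care is reconciling the deterministic separator-construction cost with the loop cost. Deterministically, Theorem \ref{thm:sep} builds $R$ in time $m^{O(1)} 2^{(1 - 2\epsilon_r^2)m + o(m)}$. For $d \ge 3$ I would show this is dominated by the loop by checking $(1 - 2\epsilon_r^2) \le (1 - \epsilon_r)\log_2 d$: since $\epsilon_r \le \epsilon_2 = (1 - 2^{-1/2})^2$ and $\log_2 d \ge \log_2 3$, the right-hand side is at least $(1-\epsilon_2)\log_2 3 > 1 \ge 1 - 2\epsilon_r^2$, so the total deterministic time is $d^{(1-\epsilon_r)m + o(m)}$. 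For $d = 2$, however, $\epsilon_r < \frac{1}{2}$ forces $2\epsilon_r^2 < \epsilon_r$, hence $1 - 2\epsilon_r^2 > 1 - \epsilon_r$, so the separator construction \emph{dominates} the loop and the deterministic running time is $2^{(1 - 2\epsilon_r^2)m + o(m)}$, exactly as stated. This case split is the crux of matching the three claimed bounds.
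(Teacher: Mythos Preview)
Your proposal is correct and follows essentially the same approach as the paper: verify that $H_\Psi$ meets the hypotheses of Theorem~\ref{thm:sep}, use Propositions~\ref{prp:csp-component} and~\ref{prp:csp-restriction} for correctness, bound the loop by $d^{|R|}\cdot d^{m/2}$, and then compare this with the separator-construction cost via the same case split on $d\ge 3$ versus $d=2$. The paper handles the $d\ge 3$ comparison with the terser observation $2^m < 3^{(1-\epsilon_r)m}$ (equivalent to your inequality after noting $\epsilon_r\le \epsilon_2<1-1/\log_2 3$) and additionally records $|\Psi|\le O(mrd^k)\le d^{o(m)}$ to absorb the polynomial overhead into the exponent, a step you leave implicit.
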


\begin{proof}
The correctness of the algorithm follows immediately from Proposition \ref{prp:csp-restriction}. In polynomial time we can construct $H_{\Psi} = (V, E)$. Observe that $H_{\Psi}$ has $m$ edges each of size at most $r$ and it has vertex degree at most $k$. By Theorem \ref{thm:sep}, we can find a balanced separator of size $(\frac{1}{2} - \epsilon_r)m + o(m)$ deterministically in time $2^{(1 - 2\epsilon_r^2)m+o(m)}$ or probabilistically in expected polynomial time in $m$. After having found the balanced separator $R$, there are at most $d^{(\frac{1}{2} - \epsilon_r)m + o(m)}$ runs of the for loop over $\rho$.  For each restriction $\rho$ we spend $|\Psi|^{O(1)}$ time to compute the decomposition of $\Psi$. Then for each of these parts we exhaustively check its satisfiability in time at most $d^{m/2}$. Since $\Psi$ breaks into at most $m := |V|$ parts the total running time after finding $R$ is at most $|\Psi|^{O(1)}d^{(\frac{1}{2} - \epsilon_r)m + o(m) + m/2} = |\Psi|^{O(1)}d^{(1 - \epsilon_r)m + o(m)}$. 
For $d\ge 3$, since $2^m < 3^{(1 - \epsilon_r)m}$, the total running time including finding the separator is
bounded by $|\Psi|^{O(1)}d^{(1 - \epsilon_r)m + o(m)}$. The claim follows by noting that $|\Psi|\le O(mrd^k) \le d^{o(m)}$.
For $d=2$, if we use the randomized procedure to find $R$, the total expected running time will be at most $|\Psi|^{O(1)}2^{(1 - \epsilon_r)m + o(m)}$, and if we run the deterministic procedure to find $R$, the total running time is at most $|\Psi|^{O(1)}(2^{(1 - 2\epsilon_r^2)m+o(m)} + 2^{(1 - \epsilon_r)m + o(m)}) \le |\Psi|^{O(1)}2^{(1 - 2\epsilon_r^2)m + o(m)}$. 
\end{proof}

\setcounter{thm}{\value{tmp1}}

\noindent{\bf Remark. }We can slightly modify the algorithm and instead of performing exhaustive search on the disjoint parts of the CSP we can make a recursive call to the algorithm. It is easy to verify that this improves the savings by a factor of two.

\begin{cor}
\label{thm:avg-csp-sat}
Let $r\ge 1$ be a fixed real, $m, d, k\ge 1$ be integers such that $k \in o(\sqrt{m})$.
Let $\epsilon_r = (1 - 2^{-1/r})^r$.
Let $\Psi$ be an $m$-variable $(d, k)$-CSP with average variable frequency at most $r$. 
$\ralg(\Psi)$ correctly decides the satisfiability of $\Psi$. Moreover, if $d\ge 3$ then it runs 
deterministically in time $d^{(1 - \epsilon_{\lceil 2r \rceil}/2)m + o(m)}$, if $d=2$ then
it runs in expected time $2^{(1 - \epsilon_{\lceil 2r \rceil}/2)m + o(m)}$ if we find $R$ randomly,
and in deterministic time $2^{(1 - \epsilon_{\lceil 2r \rceil}^2)m + o(m)}$ if we find $R$ deterministically.
\end{cor}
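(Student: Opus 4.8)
The plan is to reduce the average-frequency setting to the bounded-frequency Theorem~\ref{thm:csp-sat} by branching away the few variables of large frequency. Write $\Psi = (X,\calC)$ for the given $m$-variable $(d,k)$-CSP, and note that average variable frequency at most $r$ means the total number of variable--constraint incidences $\sum_{x \in X}\mathrm{freq}(x)$ is at most $rm$. Call a variable \emph{heavy} if it appears in more than $\lceil 2r\rceil$ constraints, and let $B \subseteq X$ collect the heavy variables. A one-line Markov-type estimate bounds $|B|$: since $\lceil 2r\rceil\,|B| \le \sum_{x\in B}\mathrm{freq}(x) \le \sum_{x\in X}\mathrm{freq}(x) \le rm$ and $\lceil 2r\rceil \ge 2r$, we get $|B| \le m/2$. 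Every remaining variable has frequency at most $\lceil 2r\rceil$.

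First I would enumerate all $d^{|B|} \le d^{m/2}$ assignments $\rho \in \Sigma^{B}$ to the heavy variables, and for each run the bounded-frequency algorithm on the restriction $\Psi|_{\rho}$ (equivalently, one may fold $B$ into the balanced separator produced inside $\ralg$, so that the overall procedure is literally $\ralg$ applied to $\Psi$). The key structural point, supplied by Proposition~\ref{prp:csp-restriction}, is that restricting variables never creates constraints and hence never increases the frequency of a surviving variable; thus $\Psi|_{\rho}$ is an $m' := m - |B| \ge m/2$ variable $(d,k)$-CSP of variable frequency at most $\lceil 2r\rceil$, to which Theorem~\ref{thm:csp-sat} applies with parameter $\lceil 2r\rceil$ in place of $r$. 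Correctness is immediate: $\Psi$ is satisfiable iff $\Psi|_{\rho}$ is satisfiable for some $\rho$, and each inner call decides its instance correctly.

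The running-time bounds then drop out of the single inequality $m' \ge m/2$ together with $|B| + m' = m$. For $d \ge 3$ the cost is $d^{|B|}\cdot d^{(1-\epsilon_{\lceil 2r\rceil})m' + o(m')}$, whose exponent equals $m - \epsilon_{\lceil 2r\rceil}m' + o(m) \le (1 - \epsilon_{\lceil 2r\rceil}/2)m + o(m)$; the randomized $d=2$ case is identical. For the deterministic $d=2$ case I would instead insert the bound $2^{(1-2\epsilon_{\lceil 2r\rceil}^2)m' + o(m')}$ from Theorem~\ref{thm:csp-sat}, giving exponent $m - 2\epsilon_{\lceil 2r\rceil}^2 m' + o(m) \le (1 - \epsilon_{\lceil 2r\rceil}^2)m + o(m)$, the halving of the coefficient being once more the substitution $m' \ge m/2$. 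Along the way one checks that the asymptotics transfer to the inner instances, namely $m' \to \infty$ and $k \in o(\sqrt{m}) \subseteq o(\sqrt{m'})$ because $m' \ge m/2$, so that every $o(m')$ error is $o(m)$.

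I do not expect a genuine obstacle here: this is a clean win-win reduction, and the only delicate point is bookkeeping rather than a hard estimate. The factor-$1/2$ loss in the savings (from $\epsilon_r$ to $\epsilon_{\lceil 2r\rceil}/2$, and from $2\epsilon_r^2$ to $\epsilon_{\lceil 2r\rceil}^2$) is exactly the price of spending the full $d^{|B|}$ on up to half of the variables while enjoying the Theorem~\ref{thm:csp-sat} savings only on the remaining (at least) half; verifying that the threshold $\lceil 2r\rceil$ is the right place to cut between ``branch exhaustively'' and ``invoke the theorem'' is the one spot where the constants must be chosen with care.
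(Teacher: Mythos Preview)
Your proposal is correct and follows essentially the same approach as the paper: isolate the at most $m/2$ variables of frequency exceeding $\lceil 2r\rceil$ by a Markov-type count, branch over all their $d^{m/2}$ assignments, and invoke Theorem~\ref{thm:csp-sat} on each restricted instance. Your write-up is in fact more careful than the paper's on the bookkeeping (tracking $m' = m - |B| \ge m/2$ rather than simply $m/2$, and checking that $k \in o(\sqrt{m'})$ so the inner call is legitimate).
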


\begin{proof} Consider the case $d\ge 3$.
There are at most $m/2$ variables with frequency $\ge \lceil 2r \rceil$. For all possible settings
of those variables run the $\ralg(\Psi)$ algorithm on $\Psi$ restricted to that setting. In such a
restricted formula all variables have frequency at most $\lceil 2r \rceil$. By the previous
theorem the running time will be $d^{m/2} \cdot d^{(1 - \epsilon_{\lceil 2r \rceil}){m/2} + o(m)}$. The case for $d=2$ is analogous.
\end{proof}

\noindent{\bf Remark. }Note that our algorithm trivially also counts the number of satisfying assignments, hence \#CSP-SAT, and Max-CSP-SAT versions.

\section{Upper Bounds for Tree-like Resolution}

In this section we use our separator theorem to give non-trivial refutations of CNF representations of unsatisfiable $(2, k)$-CSPs with bounded variable frequency. Recall that a $(2, k)$-CSP with $n$ constraints can be represented by a $k$-CNF with $n2^k$ clauses. This class of CSPs includes the extensively studied Tseitin formulas which essentially encode that in a simple graph the number of odd degree vertices is even. Here we consider a more general definition for hypergraphs due to Pudl{\'a}k and Impagliazzo~\cite{DBLP:conf/soda/PudlakI00}.

\begin{defi}
Let $H = (V, E)$ be a hypergraph and let $\lambda : V \rightarrow \{0, 1\}$. The Tseitin formula on $H$, $T(H, \lambda)$, has a variable $x_e$ for every edge $e \in E$ and states that for every $v \in V$, $\oplus_{e \ni v} x_e \equiv \lambda(v)$. When $\oplus_{v \in V}\lambda(v) \equiv 1$ we call $\lambda$ an odd charge labeling. 
\end{defi}

When $\lambda$ is an odd charge labeling and each edge has even cardinality, $T(H, \lambda)$ is unsatisfiable. When $H$ has maximum degree $k$, $T(H, \lambda)$ is a $(2, k)$-CSP. From here on we will use $T(H, \lambda)$ to denote both the CSP and its CNF representation when it is clear from the context which one we refer to.

\begin{thm}
\label{thm:dpll}
Let $\Psi$ be an unsatisfiable $(2, k)$-CSP with variable frequency at most $r$ on $m$ variables. If $r$ is fixed and $k \in o(\sqrt{m})$, then there exists a tree-like resolution refutation of the CNF representation of $\Psi$ of size $2^{(1 - \epsilon_r)m + o(m)}$, where $\epsilon_r = (1 - 2^{-1/r})^r$. 
\end{thm}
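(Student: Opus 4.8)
The plan is to produce a shallow decision tree for the CNF representation of $\Psi$ and then invoke Corollary \ref{cor:dtree}. First I would form the hypergraph $H_\Psi = (V, E)$ associated to $\Psi$ exactly as in the CSP section: its vertices are the constraints of $\Psi$, and each of the $m$ variables $x$ gives a hyperedge $e_x$ consisting of all constraints mentioning $x$. Since $\Psi$ has variable frequency at most $r$, every hyperedge has size at most $r$, and since each constraint is on at most $k$ variables, the maximum vertex degree of $H_\Psi$ is at most $k = o(\sqrt m)$. Thus $H_\Psi$ meets the hypotheses of Theorem \ref{thm:sep}, which supplies a balanced separator $R \subseteq E$ of size at most $(\frac12 - \epsilon_r)m + o(m)$; I identify $R$ with the corresponding set of at most $(\frac12 - \epsilon_r)m + o(m)$ variables.

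Next I would build the decision tree in two stages. In the first stage the tree queries all variables of $R$; this is a complete binary tree of depth $|R|$ whose $2^{|R|}$ leaves correspond to the partial assignments $\rho \in \{0,1\}^R$. For each such $\rho$, Proposition \ref{prp:csp-restriction} tells us that $H_{\Psi|_\rho}$ is $H_\Psi$ with the hyperedges of $R$ deleted, so by the separator property it splits into connected components, each incident to at most $m/2$ variables, and that at least one component $\Psi_i$ is unsatisfiable. In the second stage, below the leaf for $\rho$, the tree queries all (at most $m/2$) variables of one fixed such unsatisfiable $\Psi_i$, again as a complete subtree.

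The key verification is that every leaf of the resulting tree falsifies a clause of the CNF of $\Psi$. At a leaf we have a total assignment to $R$ (via $\rho$) together with a total assignment to the variables of $\Psi_i$. Because $\Psi_i$ is unsatisfiable, this assignment violates some constraint $C$ of $\Psi_i$, and every variable of $C$ is already set: the variables of $C$ lying in $R$ are fixed by $\rho$, while the remaining variables of $C$ are precisely the hyperedges incident to the vertex $C$ in $H_{\Psi|_\rho}$, all of which lie in the same connected component as $C$, namely $\Psi_i$, and hence are fixed by the second stage. A violated constraint with all its variables assigned corresponds to a falsified clause of the CNF representation, which we place at the leaf. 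The total depth is at most $|R| + m/2 \le (\frac12 - \epsilon_r)m + o(m) + m/2 = (1 - \epsilon_r)m + o(m)$, so Corollary \ref{cor:dtree} yields a tree-like resolution refutation of size $2^{(1-\epsilon_r)m + o(m)}$.

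The hard part is exactly this leaf-falsification step: one must ensure that no variable of the violated constraint $C$ escapes into another component or remains unqueried. This is where the structure of $H_\Psi$ is essential — two variables sharing a constraint give hyperedges sharing the vertex $C$, so after removing $R$ they remain co-located in one component, guaranteeing that fixing $R$ and the single component $\Psi_i$ assigns all of $C$. I would also note that, unlike the recursive CSP-SAT algorithm, this argument applies the separator theorem only once, at the top level, so the hypothesis $k = o(\sqrt m)$ is needed only for the original instance and the plain $(\frac12 - \epsilon_r)m + o(m)$ bound (rather than the refined $C_r k \sqrt m$ bound) suffices.
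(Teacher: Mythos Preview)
Your proposal is correct and follows essentially the same route as the paper: apply Theorem~\ref{thm:sep} to $H_\Psi$ to get a separator $R$, query all variables in $R$, then below each leaf $\rho$ query all variables of one unsatisfiable component $\Psi_i$ of $\Psi|_\rho$, and invoke Corollary~\ref{cor:dtree} on the resulting depth-$(1-\epsilon_r)m+o(m)$ decision tree. Your leaf-falsification paragraph spells out carefully what the paper dismisses with ``it is clear that at every leaf a contradiction is forced,'' but the underlying argument is the same.
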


\begin{proof}
Using Corollary \ref{cor:dtree} it is sufficient to give a decision tree for $\Psi$ of depth at most $(1 - \epsilon_r)m + o(m)$.

We will make use of Theorem \ref{thm:sep} applied to $H_{\Psi}$ and the strategy is quite immediate. The decision tree starts with a complete binary tree on all variables corresponding to the hyperedges in the balanced separator $R$ given by Theorem \ref{thm:sep} of size at most $(\frac{1}{2} - \epsilon_r)m + o(m)$. For every leaf with partial assignment $\rho$, by the separator property, Proposition \ref{prp:csp-restriction} and Proposition \ref{prp:csp-component} we can write $\Psi|_{\rho} = \wedge_{i = 1}^t\Psi_i$ for some $t$, where $\Psi_i$s are on disjoint sets of at most $m/2$ variables. Furthermore at least one of $\Psi_i$s is unsatisfiable. We then append this leaf by the complete binary tree on all the variables in $\Psi_i$. It is clear that at every leaf of this tree a contradiction is forced. The depth of this tree is at most $(\frac{1}{2} - \epsilon_r)m + o(m) + m/2 = (1 - \epsilon_r)m + o(m)$ and we are done.
\end{proof}

\begin{cor}
\label{cor:width}
Let $\Psi$ be an unsatisfiable $(2, k)$-CSP with variable frequency at most $r$ on $m$ variables. If $r$ is fixed and $k \in o(\sqrt{m})$, then there exists a resolution refutation of the CNF representation of $\Psi$ of width $(1 - \epsilon_r)m + o(m)$, where $\epsilon_r = (1 - 2^{-1/r})^r$. 
\end{cor}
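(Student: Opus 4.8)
The plan is to derive this corollary directly from Theorem~\ref{thm:dpll} by invoking the standard correspondence between tree-like resolution proof size and resolution width. Since Theorem~\ref{thm:dpll} already hands us a tree-like resolution refutation of $\Psi$ of size $2^{(1-\epsilon_r)m + o(m)}$, the natural route is to recall the classical size-width relationship for resolution: a refutation realized by a decision tree of depth $D$ can be converted into a (dag-like) resolution refutation whose width is bounded in terms of $D$. Concretely, I would lean on the fact that the decision tree of depth $(1-\epsilon_r)m + o(m)$ constructed in the proof of Theorem~\ref{thm:dpll} yields clauses each of which mentions only the variables queried along a root-to-node path, so the width is controlled by the depth.

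First I would unfold the proof of Theorem~\ref{thm:dpll} rather than treat it as a black box, because the decision tree there is built explicitly and its structure makes the width bound transparent. Recall that the tree queries the at most $(\frac{1}{2} - \epsilon_r)m + o(m)$ variables of the separator $R$, and then along each branch queries the at most $m/2$ variables of the unsatisfiable component $\Psi_i$. The resolution refutation obtained from such a decision tree (via Theorem~\ref{thm:dtree}) has the feature that every clause in it is the negation of a partial assignment corresponding to some node of the tree, and hence involves at most as many variables as the depth of that node. Since the total depth is $(1-\epsilon_r)m + o(m)$, every clause appearing in the refutation has width at most $(1-\epsilon_r)m + o(m)$, which is exactly the claimed bound.

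The key steps, in order, would be: (1) invoke Corollary~\ref{cor:dtree} and Theorem~\ref{thm:dtree} to pass from the decision tree of Theorem~\ref{thm:dpll} to a tree-like resolution refutation; (2) observe that in the resolution refutation read off from a decision tree, each derived clause is falsified by the partial assignment labeling the corresponding tree node, so its variable set is a subset of the variables queried on the path to that node; (3) conclude that the width of every such clause is bounded by the depth of the tree, namely $(1-\epsilon_r)m + o(m)$. Since a tree-like refutation is in particular a resolution refutation, the width bound transfers to general resolution, giving the stated conclusion.

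I expect the only subtlety---rather than a genuine obstacle---to be making precise the claim that clauses extracted from a depth-$D$ decision tree have width at most $D$. This is folklore and follows directly from the construction underlying Theorem~\ref{thm:dtree}: the clause associated to each node is precisely the disjunction of literals that falsify the node's partial assignment, and a partial assignment reached at depth $\ell$ fixes exactly $\ell$ variables. So no new probabilistic or combinatorial work is needed; the corollary is a clean reinterpretation of the depth bound already established. The main care is simply to state the depth-to-width passage cleanly and cite the same decision-tree machinery used for Theorem~\ref{thm:dpll}.
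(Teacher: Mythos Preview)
Your proposal is correct, but it takes a different route from the paper. The paper's proof is a one-liner invoking the Ben-Sasson--Wigderson size--width tradeoff as a black box: a $k$-CNF with a tree-like refutation of size $2^{w}$ has a resolution refutation of width $w+k$, so plugging in $w=(1-\epsilon_r)m+o(m)$ and $k\in o(\sqrt{m})$ gives the bound. You instead open up the decision tree from Theorem~\ref{thm:dpll} and use the elementary observation that in the tree-like refutation read off from a depth-$D$ decision tree, the clause at a node of depth $d$ is falsified by the corresponding partial assignment and hence has width at most $d\le D$. This is more direct and avoids citing the general size--width machinery; the depth bound is already established in Theorem~\ref{thm:dpll}, so width follows immediately. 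One small imprecision: the clause at a node need not be \emph{exactly} the negation of the partial assignment (at leaves it is an axiom, which may be a strict sub-clause), but your conclusion that its variables lie among those queried on the path is nonetheless correct, and that is all you need.
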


\begin{proof}
This is immediate from Theorem \ref{thm:dpll} and the seminal result of Ben-Sasson and Wigderson~\cite{DBLP:journals/jacm/Ben-SassonW01} which states that if a $k$-CNF formula has a tree-like resolution refutation of size $2^w$, then it has a width $w + k$ resolution refutation.
\end{proof}

\begin{cor}
\label{cor:ts}
Let $H = (V, E)$ be an $r$-uniform hypergraph of maximum degree $k$ where $r$ is even and let $\lambda : V \rightarrow \{0, 1\}$ be an odd charge labeling. If $r$ is fixed and $k \in o(\sqrt{|E|})$ then $T(H, \lambda)$ can be refuted in tree-like resolution in size $2^{(1 - \epsilon_r)|E| + o(|E|)}$, where $\epsilon_r = (1 - 2^{-1/r})^r$.
\end{cor}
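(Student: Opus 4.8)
The plan is to derive Corollary~\ref{cor:ts} as a direct specialization of Theorem~\ref{thm:dpll}, so very little new work is needed. First I would observe that $T(H, \lambda)$ is by definition a $(2, k)$-CSP: it has one Boolean variable $x_e$ per edge $e \in E$, so the number of variables is exactly $m := |E|$, and each parity constraint is attached to a vertex $v \in V$ and involves only the edges incident to $v$. Since $H$ is $r$-uniform with maximum degree $k$, each constraint has arity at most $k$, which gives the $(2,k)$-CSP structure. Because $r$ is even and $\lambda$ is an odd charge labeling, the hypothesis preceding the corollary guarantees that $T(H, \lambda)$ is unsatisfiable, so Theorem~\ref{thm:dpll} is applicable.

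The one point that needs care is the variable-frequency parameter. Theorem~\ref{thm:dpll} requires an unsatisfiable $(2,k)$-CSP with variable frequency at most $r$, where ``variable frequency'' means the largest number of constraints in which any single variable appears. Here the variables are the edges $x_e$, and the constraints are indexed by vertices; a variable $x_e$ appears in the constraint for vertex $v$ exactly when $v \in e$. Since $H$ is $r$-uniform, each edge $e$ contains precisely $r$ vertices, so every variable $x_e$ appears in exactly $r$ constraints. Thus the variable frequency of $T(H, \lambda)$ is exactly $r$, matching the hypothesis of Theorem~\ref{thm:dpll} with the very same $r$. It is worth noting that this is a pleasant duality: the \emph{uniformity} of $H$ becomes the \emph{variable frequency} of the CSP, whereas the \emph{maximum degree} $k$ of $H$ becomes the \emph{arity} bound $k$ of the constraints, consistent with $k \in o(\sqrt{|E|}) = o(\sqrt{m})$.

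Having matched all hypotheses, I would simply invoke Theorem~\ref{thm:dpll}: since $r$ is fixed and $k \in o(\sqrt{m})$ with $m = |E|$, the CNF representation of the unsatisfiable $(2,k)$-CSP $T(H, \lambda)$ admits a tree-like resolution refutation of size $2^{(1 - \epsilon_r)m + o(m)} = 2^{(1 - \epsilon_r)|E| + o(|E|)}$, where $\epsilon_r = (1 - 2^{-1/r})^r$. This is exactly the claimed bound, completing the argument.

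I do not anticipate any genuine obstacle here, since the corollary is essentially a translation of definitions rather than a new theorem. The only place where one could slip is conflating the two roles that $r$ and $k$ play on the hypergraph side versus the CSP side; the main thing to verify carefully is that $r$-uniformity yields variable frequency \emph{exactly} $r$ (not merely at most $r$), which is what lets us use the same subscript in $\epsilon_r$ throughout. Everything else, including unsatisfiability, is handed to us by the setup immediately preceding the corollary and by Theorem~\ref{thm:dpll}.
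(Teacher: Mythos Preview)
Your proposal is correct and follows exactly the paper's approach: the paper's proof is the single line ``Observe that each variable appears in $r$ constraints,'' and you have simply unpacked this observation in full detail, verifying that $r$-uniformity of $H$ gives variable frequency exactly $r$ in the CSP $T(H,\lambda)$ so that Theorem~\ref{thm:dpll} applies with $m=|E|$.
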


\begin{proof}
Observe that each variable appears in $r$ constraints.
\end{proof}

The case $r = 2$ corresponds to the the usual Tseitin tautologies on simple graphs. We give a finer analysis for this case which also involves a sharper derandomization.

\begin{thm}
\label{thm:tseitn-aut}

There exists a deterministic algorithm which on input $T(G, \lambda)$ where graph $G = (V, E)$ is of maximum degree $\Delta(G) \in o(\sqrt{|E|})$ and $\lambda$ is any odd charge labeling, produces a tree-like resolution refutation of $T(G, \lambda)$ in time $$|E|^{O(1)}2^{(1 - 2(1 - \frac{1}{\sqrt{2}})^2)|E| + o(|E|)} \le 2^{0.83|E|}.$$
\end{thm}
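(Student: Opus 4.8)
The goal is a deterministic algorithm producing a tree-like resolution refutation of $T(G, \lambda)$ in time bounded by $2^{0.83|E|}$, where $G$ has degree $o(\sqrt{|E|})$. The exponent $1 - 2(1 - \frac{1}{\sqrt{2}})^2$ is exactly $1 - 2\epsilon_2^2$ with $\epsilon_2 = (1 - 2^{-1/2})^2 = (1 - \frac{1}{\sqrt{2}})^2$, so this is the deterministic savings bound from Theorem~\ref{thm:sep} applied to the case $r = 2$. The plan is therefore to follow the decision-tree construction of Theorem~\ref{thm:dpll} in the special case $r = 2$, but to replace the running-time analysis so that the dominant cost is the deterministic search for the separator rather than the size of the proof object itself.

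First I would recall that by Corollary~\ref{cor:dtree} it suffices to build a decision tree of small depth and then emit the corresponding tree-like resolution refutation; the size of the refutation is at most $2^{\text{depth}}$, which by Theorem~\ref{thm:dpll} (with $r = 2$) is $2^{(1 - \epsilon_2)|E| + o(|E|)}$. Since $G$ is a simple graph every variable $x_e$ appears in exactly the two constraints indexed by the endpoints of $e$, so $T(G, \lambda)$ is a $(2, k)$-CSP of variable frequency $2$ with $|E|$ variables, and $H_{T(G,\lambda)}$ is, up to the uniformity preprocessing, just $G$ itself. Thus the structural hypotheses of Theorem~\ref{thm:sep} and Theorem~\ref{thm:dpll} all apply. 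The second step is to actually construct the separator $R$ deterministically: invoking the deterministic branch of Theorem~\ref{thm:sep} with $r = 2$ finds a balanced separator of size $(\frac{1}{2} - \epsilon_2)|E| + o(|E|)$ in time $|E|^{O(1)} 2^{(1 - 2\epsilon_2^2)|E| + o(|E|)}$. The third step is to assemble and output the tree: starting from the complete binary tree on the $|R|$ separator variables, at each leaf with assignment $\rho$ we decompose $T(G,\lambda)|_\rho = \wedge_i \Psi_i$ by Proposition~\ref{prp:csp-restriction} and Proposition~\ref{prp:csp-component}, locate an unsatisfiable $\Psi_i$ (on at most $|E|/2$ variables), and append the complete binary tree on its variables, reaching depth $(1 - \epsilon_2)|E| + o(|E|)$ exactly as in Theorem~\ref{thm:dpll}.

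The key point of the running-time bookkeeping is that there are two distinct exponential quantities to compare. Constructing the proof object takes time proportional to its size, $2^{(1 - \epsilon_2)|E| + o(|E|)}$, while finding the separator deterministically takes $2^{(1 - 2\epsilon_2^2)|E| + o(|E|)}$. I would verify that the separator-search term dominates, i.e. that $1 - 2\epsilon_2^2 \ge 1 - \epsilon_2$, equivalently $\epsilon_2 \ge 2\epsilon_2^2$, equivalently $\epsilon_2 \le \tfrac{1}{2}$, which holds since $\epsilon_2 = (1 - \tfrac{1}{\sqrt 2})^2 < 1$ is small. Hence the total deterministic time is $|E|^{O(1)} 2^{(1 - 2\epsilon_2^2)|E| + o(|E|)}$, and the final numerical claim reduces to checking $1 - 2(1 - \tfrac{1}{\sqrt 2})^2 \le 0.83$: with $\tfrac{1}{\sqrt 2} \approx 0.7071$ one gets $(1 - \tfrac{1}{\sqrt 2})^2 \approx 0.0858$, so $1 - 2(0.0858) \approx 0.828 < 0.83$, absorbing the $o(|E|)$ factor into the slack.

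The one genuine subtlety, and the step I expect to need the most care, is the derandomization itself rather than the decision-tree skeleton. The statement promises a ``finer analysis'' and a ``sharper derandomization'' for $r = 2$, which suggests that the naive deterministic branch of Theorem~\ref{thm:sep} (brute-force search over all candidate edge sets of the target size) is what supplies the $2^{(1 - 2\epsilon_2^2)|E|}$ bound, and that the main work is confirming this brute-force search can be organized to run within that budget and to certify the balance condition in polynomial time per candidate. I would therefore spend the bulk of the argument making precise that the verification of whether a candidate $R$ is a balanced separator (computing connected components of $(V, E \setminus R)$ and checking each has at most $|E|/2$ incident edges) is polynomial, so that the enumeration cost is dominated by the binomial count ${|E| \choose (1/2 - \epsilon_2)|E| + o(|E|)} \le 2^{h(1/2 - \epsilon_2)|E| + o(|E|)} \le 2^{(1 - 2\epsilon_2^2)|E| + o(|E|)}$ via the entropy bound $h(1/2 - x/2) \le 1 - x^2/2$ already used in the proof of Theorem~\ref{thm:sep}. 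Everything else is a direct specialization of the machinery already established.
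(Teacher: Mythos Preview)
Your proposal contains an algebraic slip that turns into a genuine gap. You write that the target exponent $1 - 2(1 - \tfrac{1}{\sqrt{2}})^2$ equals $1 - 2\epsilon_2^2$ with $\epsilon_2 = (1 - \tfrac{1}{\sqrt{2}})^2$. But $(1 - \tfrac{1}{\sqrt{2}})^2$ \emph{is} $\epsilon_2$, so the target exponent is $1 - 2\epsilon_2 \approx 0.828$, not $1 - 2\epsilon_2^2 \approx 0.985$. Your plan---find one separator at the root via the deterministic branch of Theorem~\ref{thm:sep}, then brute-force the halves as in Theorem~\ref{thm:dpll}---yields a tree of size $2^{(1-\epsilon_2)|E|+o(|E|)} \approx 2^{0.91|E|}$, and the edge-subset enumeration for the separator costs $2^{(1-2\epsilon_2^2)|E|+o(|E|)} \approx 2^{0.985|E|}$. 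Neither quantity is below $2^{0.83|E|}$, so the approach does not reach the stated bound.

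The paper closes both gaps with two ideas you are missing. First, the decision tree is built \emph{recursively}: after querying a separator $R$ in $G'$, each branch passes to a connected component on at most $|E'|/2$ edges and the procedure repeats there (with simple base cases for degree-$0$, $1$, $2$ vertices). This recursion is what doubles the savings, giving leaf count $2^{(1-2\epsilon_2)|E|+o(|E|)}$ rather than $2^{(1-\epsilon_2)|E|}$. Second, the derandomization is sharper: whenever a separator is needed, the current subgraph has minimum degree at least $3$, so $|V'| \le \tfrac{2}{3}|E'|$; the algorithm then enumerates \emph{vertex cuts} $S \subseteq V'$ (not edge subsets) in time $2^{2|E'|/3}$. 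Since $2/3 < 1 - 2\epsilon_2$, the separator searches are dominated by the tree size, and the total running time is $2^{(1-2\epsilon_2)|E|+o(|E|)} \le 2^{0.83|E|}$.
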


\begin{proof}

We will construct a decision tree for $T(G, \lambda)$ and apply Theorem \ref{thm:dtree}.

Let $n = |V|$, $m = |E|$, $k=\Delta(G)$. We will build the decision tree inductively. Each node of the decision tree
will be labeled by a pair $(G',\lambda')$, where $G'=(V',E')$, $V'\subseteq V$, $E'\subseteq E$ and $\lambda'$ is an odd charge labeling of $G'$. 
At each node of the decision tree, we query the value of some edge $e=(u,v)\in E'$, and depending on the value $b\in \{0,1\}$ of the variable corresponding to the edge $e$ we descend to a child labeled by $((V',E'\setminus \{e\}),\lambda'^{e,b})$. Here, $\lambda'^{e,b}$ equals $\lambda'$ except $\lambda'^{e,b}(v)=b \oplus \lambda'(v)$ and $\lambda'^{e,b}(u)=b \oplus \lambda'(u)$. Hence, if  $\lambda'$ is an odd charge labeling so is $\lambda'^{e,b}$. (In some nodes we will reduce $G'$ even more as explained below.)
The root of the decision tree is labeled by $(G,\lambda)$.

A node of the decision tree labeled by $(G',\lambda')$ will be a leaf  
if there is a degree zero vertex $v$ in $G'$ with $\lambda'(v)=1$. In this case the constraint corresponding to $v$ is falsified, and so some CNF clause associated to that constraint is also falsified. 
If $G'$ contains a vertex $v$ of degree $1$, we query the value $b$ of the incident edge $e=(u,v)\in E'$ and we descend to children labeled by $((V',E\setminus \{e\}),\lambda'^{e,b})$. Notice, $((V',E'\setminus \{e\}),\lambda'^{e,b})$ violates the parity constraint at $v$ either for $b=0$ or $b=1$. So one of the children is a leaf.

If $G'$ contains a vertex $v$ of degree $2$, we query the value of one of the incident edges $e=(u,v)\in E'$, and the two children are labeled by $((V',E'\setminus \{e\}),\lambda'^{e,b})$. If $G'$ does not contain any vertex of degree less than 3,
we find a balanced separator $R$ in $G'$ of size at most $(\frac{1}{2} - \epsilon)|E'|+C k \sqrt{|E'|}$, where $\epsilon = (1 - \frac{1}{\sqrt{2}})^2$ and $C$ is a constant. (To find the separator we use an algorithm described later. The separator exists by a remark after Theorem \ref{thm:sep}.) We query all the edges $e \in R$ one by one and after each query we descend from $((V',E'),\lambda')$
to $((V',E'\setminus \{e\},\lambda'^{e,b})$ via an edge labeled by $b$. After asking the last edge $e$ of $R$, instead of descending to $((V',E'\setminus \{e\}),\lambda'^{e,b})$ we descend to $((V'',E''),\lambda'')$ where $E''\subseteq (E'\setminus \{e\})$ forms a connected component on which $\lambda'^{e,b}$
is an odd charge labeling. We let $V''$ be the vertices belonging to that component and we set $\lambda''$ to $\lambda'^{e,b}$ restricted to $V''$. We repeat the whole process until $|E'|\le k\sqrt{m}$ at which point we query all the remaining edges one by one to identify a vertex violating its parity constraint.

To find the balanced separator we proceed as follows: If $G'$ has minimum degree at least 3 we have $|V'| \le 2|E'|/3$. As every balanced separator corresponds to a cut in $G'$, we look for a small balanced separator by exhaustively checking all cuts $S \subseteq V'$ whether they define a small balanced separator. This will be done in time $|E'|^{O(1)} 2^{2|E'|/3}$.

We prove by induction on the height of a node that the subtree of a node labeled by $((V',E'),\lambda')$ has at most $2^{(1 - 2\epsilon)|E'|+4Ck\sqrt{|E'|}+k\sqrt{m}}$ leaves. 
If the node is a leaf of the decision tree, the bound holds. If $|E'| \le k\sqrt{m}$ then the bound holds as well. If $(V',E')$ contains a vertex of degree 1,
then one of its children is a leaf and by the induction on the non-leaf descendant we have: 
\begin{eqnarray*}
1+2^{(1 - 2\epsilon)(|E'|-1)+4Ck\sqrt{|E'|-1}+k\sqrt{m}} &\le& 2^{(1 - 2\epsilon)} \cdot 2^{(1 - 2\epsilon)(|E'|-1)+4Ck\sqrt{|E'|-1}+k\sqrt{m}}\\
&\le& 2^{(1 - 2\epsilon)|E'|+4Ck\sqrt{|E'|}+k\sqrt{m}}.
\end{eqnarray*}

If $(V',E')$ contains a vertex of degree 2, then its
children are labeled by sub-graphs containing a degree one vertex, so the number of leaves in the entire sub-tree is bounded by induction by: $$2 + 2\cdot 2^{(1 - 2\epsilon)(|E'|-2)+4Ck\sqrt{|E'|-2}+k\sqrt{m}}
\le 2^{(1 - 2\epsilon)|E'|+4Ck\sqrt{|E'|}+k\sqrt{m}}$$ as desired. If $(V',E')$ has minimum degree at least three, then we find a separator $R$ of size at most $(\frac{1}{2} - \epsilon)|E'|+C k \sqrt{|E'|}$
and we descend to $2^{|R|}$ nodes labeled by sub-graphs on at most $|E'|/2$ edges. 
By induction the size of the subtree will be at most

\begin{eqnarray*} 
2^{|R|} \cdot 2^{(1 - 2\epsilon)|E'|/2+4Ck\sqrt{|E'|/2}+k\sqrt{m}}  &\le& 2^{(1 - 2\epsilon)|E'|/2+C k \sqrt{|E'|}} \cdot 2^{(1 - 2\epsilon)|E'|/2+4C k\sqrt{|E'|/2}+k\sqrt{m}} \\
&\le&  2^{(1 - 2\epsilon)|E'|+4C k\sqrt{|E'|}+k\sqrt{m}}
\end{eqnarray*}
as required.

Hence, for $k\in o(\sqrt{m})$, the size of the decision tree is $2^{(1 - 2(1 - \frac{1}{\sqrt{2}})^2)|E| + o(|E|)}$. Descendants of each node of the tree can be determined in time $|E|^{O(1)}$ given the balanced separators so to bound the overall running time we only need to bound the time spent in search for the balanced separators. If we process the graphs $(V',E')$ labeling the decision tree always in some canonical way (we pick the edge to query irrespective of $\lambda'$), then we get at most $n$ distinct graphs at a given level of the decision tree, each on a disjoint set of vertices. Hence, there will be at most $|V|\cdot |E|$ distinct graphs in total. For each of them we need to call the balanced separator procedure only once. Hence, the time spent in them will be at most $|E|^{O(1)} \cdot 2^{2|E|/3}$. As $(1 - 2(1 - \frac{1}{\sqrt{2}})^2) > 2/3$ the claim follows.
\end{proof}

\section{Conclusion}

We showed that we can remove a small number of edges from $r$-uniform hypergraphs with bounded vertex degree to break them into connected components each with at most half of the edges. This was used to solve the satisfiability of sparse CSPs. Many problems can be stated in terms of hypergraphs. Therefore we hope that our separator theorem will find more applications.

Furthermore we believe that the following problems deserve attention:

\begin{enumerate}
\item Give $(d, k)$-SAT algorithms with savings not depending on $d$, or show that under some complexity assumption this is not possible.
\item Extend our result on Tseitin formulas to larger classes of $k$-CNFs which admit resolution proofs with non-trivial savings.
\item Find deterministic algorithms which produce resolution proofs with non-trivial savings.
\end{enumerate}

\section{Acknowledgments}

The research of Kouck{\'y} and Talebanfard was supported by GA{\v C}R grant 19-27871X. R{\"o}dl's research was supported by NSF grant DMS 1764385. We are grateful to anonymous reviewers whose comments allowed us to improve the presentation of the paper.

\bibliographystyle{alpha} 
\bibliography{ref}

\end{document}